\documentclass[11pt]{article}
\usepackage{graphicx}
\usepackage{amsfonts}
\usepackage{amsmath}
\usepackage{amssymb}
\usepackage{url}
\usepackage{fancyhdr}
\usepackage{indentfirst}
\usepackage{enumerate}
\usepackage{dsfont,footmisc}
\usepackage{comment}

\setlength{\footnotesep}{0.8\baselineskip}

\usepackage{setspace}

\usepackage[misc]{ifsym}

\usepackage{amsthm}
\usepackage{color}
\usepackage{natbib}
\usepackage[british]{babel}
  \usepackage[colorlinks=true,citecolor=blue]{hyperref}
\usepackage[right,pagewise,displaymath, mathlines]{lineno}


\usepackage{tikz}

\usetikzlibrary{decorations.pathreplacing}
\usepackage{tikz-qtree}

\usetikzlibrary{arrows.meta,positioning}


\def\d{\mathrm{d}}

\newcommand{\var}{\mathrm{Var}}
\newcommand{\cov}{\mathrm{Cov}}

\newcommand{\E}{\mathbb{E}}
\newcommand{\R}{\mathbb{R}}

\newcommand{\N}{\mathbb{N}}
\newcommand{\p}{\mathbb{P}}

\newcommand{\id}{\mathds{1}}

\newcommand{\X}{\mathcal X}

\newcommand{\XX}{\mathbf X}

\renewcommand{\ge}{\geqslant}
\renewcommand{\le}{\leqslant}
\renewcommand{\geq}{\geqslant}
\renewcommand{\leq}{\leqslant}
\renewcommand{\epsilon}{\varepsilon}

\theoremstyle{plain}
\newtheorem{theorem}{Theorem}
\newtheorem{corollary}{Corollary}
\newtheorem{lemma}{Lemma}
\newtheorem{proposition}{Proposition}

\theoremstyle{definition}
\newtheorem{example}{Example}


\newcommand{\namedthm}[2]{\theoremstyle{definition}
	\newtheorem*{thm#1}{Axiom #1}\begin{thm#1}#2\end{thm#1}}

\newcommand{\namedthmb}[2]{\theoremstyle{definition}
	\newtheorem*{thm#1}{Property #1}\begin{thm#1}#2\end{thm#1}}

\theoremstyle{remark}
\newtheorem{remark}{Remark}

\theoremstyle{definition}

\renewcommand{\cite}{\citet}
\renewcommand{\cdots}{\dots}

\setlength{\bibsep}{0pt}

\newcommand{\com}[1]{\marginpar{{\begin{minipage}{0.12\textwidth}{\setstretch{1.1} \begin{flushleft} \footnotesize \color{red}{#1} \end{flushleft} }\end{minipage}}}}
%
%
%
%
%
%

\topmargin -1.30cm \oddsidemargin -0.00cm \evensidemargin 0.0cm
\textwidth 16.56cm \textheight 23.20cm

\parindent 5ex



\begin{document} 
	
\title{An axiomatic theory for anonymized risk sharing}

\author{Zhanyi Jiao\thanks{Department of Statistics and Actuarial Science, University of Waterloo, Canada. \Letter~\url{z27jiao@uwaterloo.ca}} 
\and
Steven Kou\thanks{Questrom School of Business, Boston University, USA. \Letter~\url{kou@bu.edu}}  
	\and
	Yang Liu\thanks{Department of Management Science and Engineering, Stanford University, USA. \Letter~\url{yangliu3@stanford.edu}} 
	\and
	Ruodu Wang\thanks{Department of Statistics and Actuarial Science, University of Waterloo, Canada. \Letter~\url{wang@uwaterloo.ca}} 
}

\maketitle

\begin{abstract}
	
We study an axiomatic framework for anonymized risk sharing. In contrast to traditional risk sharing settings, our framework requires no information on preferences, identities, private operations and realized losses from the individual agents, and thereby it is useful for modeling risk sharing in decentralized systems. Four  axioms  natural in such a framework -- actuarial fairness, risk fairness, risk anonymity, and operational anonymity -- are put forward and discussed. We establish the remarkable fact that the four axioms characterizes  the conditional mean risk sharing rule, revealing the unique and prominent role of this popular risk sharing rule among all others in relevant applications of anonymized risk sharing. Several other properties and their relations to the four axioms are studied, as well as their implications  in rationalizing the design of Bitcoin mining pools. 
	
	~
	
	\noindent \textbf{Keywords}:  Conditional expectation, anonymity, fairness, P2P  insurance, Bitcoin mining pools
\end{abstract}

\section{Introduction}\label{sec:intro}
 
Risk sharing, as  one of the most popular risk management mechanisms, refers to pooling risks from  several participants in a group and reallocating the total risk  based in a specific way. A risk sharing scheme arises in different forms, such as insurance, tontines, taxation, founders stock, investment profit sharing, and Bitcoin mining pools, to name a few. In these contexts, either wealth or  losses, or both of them, may be shared among participants.

The participants of a risk sharing scheme, such as individual investors, co-workers, financial institutions, policyholders and an insurer, peer-to-peer (P2P) insureds, and miners in a Bitcoin mining pool,  are generally referred to as \emph{agents}.
Each agent has an \emph{initial risk contribution}, and  it will be exchanged to a new position after risk sharing, which we call an \emph{allocation} to the agent. 

A sensible, or even optimal in some sense, risk sharing arrangement can be obtained in several ways. Two common approaches studied in the literature 
are either through a centralized planner 
or through a  trading mechanism  such as an exchange market to arrive at some forms of equilibria.
These equilibria are often  Pareto   or competitive equilibria; see e.g., \cite{S11} for a general treatment.  
In either form of equilibria, information on the preferences of the agents is required to define and compute an equilibrium.
Commonly used preferences include  
 expected utility, rank-dependent utility, cumulative prospect theory, risk measures, and many more advanced models;  see \cite{W10} for decision models and \cite{FS16} for risk measures.
Equilibrium risk sharing is studied in the classic work of \cite{AD54} and \cite{B62} among a very rich literature.\footnote{See the later work on risk sharing by \cite{BE05} for convex risk measures,  \cite{CDG12} for multivatiate stochastic dominance,  \cite{XZ16} for rank-dependent utilities, \cite{CLW17} for reinsurance arrangements, and \cite{ELW18} for  quantile-based risk measures.}
In practical situations, however, one rarely has precise  information on the preferences, since elicitation of preferences can be  challenging and costly (e.g., \cite{L83}), and preferences may be incomplete, ambiguous, or falsely supplied (e.g., \cite{DL18}). 


In this paper, we consider a framework of \emph{anonymized risk sharing}, where no information on preferences is required or used. The key feature of 
this framework is that agents do not need to disclose their preferences, identity, or wealth level.\footnote{We chose the term ``anonymized risk sharing" over ``anonymous risk sharing", as the former emphasizes that individual information is deliberately masked (but it could be available), and the latter stresses that such information is not known or supplied.}
More precisely, the allocation to an agent is determined by the initial risk contributions of all agents, but not the specification of these agents.
For this reason, anonymized risk sharing schemes are desirable in several application such as P2P insurance (e.g.,  \cite{D19}, \cite{AF22} and ), Bitcoin mining pools (e.g., \cite{ES18} and \cite{LS20}), and tontines (e.g., \cite{CHK19} and \cite{HL22}). Several examples of risk sharing rules within our framework are presented in Section \ref{sec:2}.

To better understand a suitable anonymized risk sharing rule,  we put forward four natural  axioms, namely, actuarial fairness, risk fairness, risk anonymity, and operational anonymity. The interpretation and desirability of these axioms will be discussed in detail in Section \ref{sec:ax}. Quite remarkably, we show in Section \ref{sec:axiom}  that these four axioms \emph{uniquely} identify one risk sharing rule (Theorem \ref{th:full}),  the \emph{conditional mean risk sharing (CMRS)}. As far as we know, this paper  provides the first axiomatic result for  any risk sharing rules. 

As an important risk sharing rule in economic theory with many attractive properties, CMRS was used by \cite{LM94} to study Pareto optimality of comonotonic risk allocations, and its  properties were studied in detail by \cite{DD12}; see \cite{DDR22} for a summary of these properties.  Our characterization hence provides a first axiomatic foundation for CMRS and its applications in economic theory and decentralized finance and insurance.\footnote{Examples of decentralized insurance include P2P insurance, mutual aid, and catastrophic risk pooling; see   \cite{FLZ22} for a summary of models for decentralized insurance.}

On the technical side, the proof of Theorem \ref{th:full} relies on a new characterization of the conditional expectation which we present in Theorem \ref{thm:BT}. 
We further show that the four axioms are independent (Proposition \ref{prop:indep}). 
Several other properties related to our axioms are studied in Section \ref{sec:property}, including backtracking, universal improvement, comonotonicity, and symmetry. In particular, we show that CMRS is the unique risk sharing rule satisfying universal improvement, risk anonymity, and operational anonymity, complementing the characterization in Theorem \ref{th:full}.
In Section \ref{sec:app}, we discuss some applications in cryptocurrency mining and revenue sharing,  
highlighting the suitability of the four axioms and their implications on the unique choice of the reward sharing mechanism.
Proofs of all results are relegated to the appendices.

Research on axiomatic approaches for decision models and risk measures has a long history. For a specimen, see  the monographs by 
  \citet{G09}, \citet{W10} and  the extensive lists of references therein.
Axiomatic studies on risk functionals  have been prolific in decision theory (e.g., \cite{Y87}, \cite{S89}, \cite{MMR06} and \cite{GMMS10}) and risk measures (e.g., \cite{ADEH99}, \cite{FS02} and \cite{WZ21}).
 \cite{GPSS19} had a recent discussion on the usefulness of axiomatic approaches in modern economic theory. 
 Despite the huge success of axiomatic theories for risk functionals, an axiomatic study for risk sharing rules is missing from the literature;  our work fills in this gap. 
Our new framework imposes substantial technical challenges compared to the above literature, as   the risk sharing rules are multi-dimensional  and random-vector-valued, as opposed to  preference functionals or risk measures, which are typically real- or vector-valued.   
 
%


\section{Risk sharing rules: Definition and examples}
\label{sec:2}
We describe in this section the main object of the paper, the risk sharing rules. For this, we first need to fix some notation.  Let  $(\Omega,\mathcal F, \p)$ be a probability space and $\X$ be a set of random variables on this space, representing the set of possible random losses of interest. We assume  that $\X$ is closed under addition and $0\in \X$.
Positive values of random variables  represent losses and negative values represent gains;  flipping  this convention makes no difference in all mathematical results. 
We always treat almost surely (a.s.)~equal random variables as identical,
and we use $\sup X$ for the essential supremum of $X$, that is, $\sup X =\inf \{ x\in \R:\p(X>x)=0\}$.

In our framework, $n$ economic agents share a total risk, where $n\ge 3$ is an integer, and we write $[n]=\{1,\dots,n\}$.\footnote{We assume $n\ge 3$ since the case $n=2$ is technically different; see Example \ref{ex:1}.} Each agent $i\in [n]$ faces an initial risk $X_i$, which is the risk contribution of agent $i$ to the risk sharing pool. We use the term ``risk" to reflect that the random variable $X_i$ may be positive or negative, and sometimes we use the term ``loss" to emphasize its positive side.
For any random variable $S$, the set of all \emph{allocations} of $S$ is denoted by
$$
\mathbb{A}_n(S)=\left\{(Y_1,\ldots,Y_n)\in \mathcal{X}^n:~ \sum_{i=1}^nY_i=S\right\} . 
$$ 
Throughout, we write $\mathbf X=(X_1,\dots,X_n)$ for the initial risk (contribution) vector, and $S^\XX =\sum_{i=1}^n X_i$ for the total risk. 

A \emph{risk sharing rule} is a mapping $\mathbf A:\X^n \to\X^n$ satisfying $\mathbf{A}^{\XX}=(A_1^{\XX},\dots,A_n^{\XX})\in \mathbb A_n(S^\XX)$ for each $\mathbf X \in \X^n$.
The requirement $\mathbf{A}^{\XX}\in \mathbb A_n(S^\XX)$ means that  $\mathbf{A}^{\XX}$ sums up to the total risk. In other words, there is no external fund coming in or out of the risk sharing pool except for the initial contributions of the agents, 
a most natural requirement for defining an allocation rule.  Each component of $\mathbf{A}^\XX$ represents the (random) allocation of risk to an agent.   
Through the rule $\mathbf A$, the initial risk vector $\XX$ enters the sharing pool as an input, and the allocation vector $\mathbf{A}^\XX$ comes out as the output.  Given each  scenario $\omega \in \Omega$, the actual payment is settled as the vector $\mathbf{A}^\XX(\omega) \in \R^n$. A positive payment  ${A}_i^\XX(\omega)=x>0$ means that agent $i$ needs to pay the amount of $x$, because positive values represent losses.
This   simple procedure is illustrated in Figure \ref{fig:risk}.
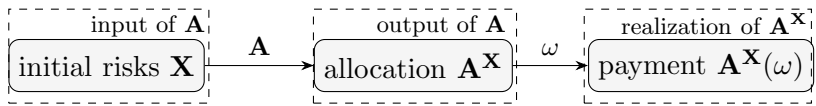
\begin{figure}[htp]
	\begin{center}
		\tikzstyle{bag} = [text width=5em, text centered]
		\tikzstyle{end} = []
		\begin{tikzpicture}[
			block/.style ={rectangle, draw=black, fill=gray!08,   align=center, rounded corners, minimum height=1.8em},
			myarrow/.style={-Stealth},
			node distance=1.5cm and 3.5cm
			] 
			\node[ block ] (c1) at (0,0)  {initial risks $\mathbf X$};  
			\node[ block ,   right= 1.45cm of c1] (c2) {allocation $\mathbf A^\XX$};     
			\draw [myarrow] (c1) -- node[sloped,font=\small, above] {$\mathbf A$}  
			(c2); 
			\draw [draw=black, dashed] (-1.3,-0.5) rectangle (1.36,0.75);
			\draw [draw=black, dashed] (2.75,-0.5) rectangle (5.45,0.75); 
			\node at (0.55,0.52) (c3) {\footnotesize input of $\mathbf A$};
			\node at (4.45,0.52) (c4) {\footnotesize output of $\mathbf A$};
			\node[ block ,   right= 1.0cm of c2] (c4) {payment $\mathbf A^\XX(\omega)$};     
			\draw [myarrow] (c2) -- node[sloped,font=\small, above] {$\omega$}  
			(c4); 
			\draw [draw=black, dashed] (6.35,-0.5) rectangle (9.37,0.75);  
			\node at (8.08,0.56) (c4) {\footnotesize realization of $\mathbf A^\XX$}; 
		\end{tikzpicture} 
	\end{center}
	\caption{Risk sharing}
	\label{fig:risk}
\end{figure}

As a key feature of this framework, different from the large body of   risk sharing problems studied in the literature, a risk sharing rule $\mathbf A$ does not require any information on the preferences
of the agents, a risk exchange market, or subjective decisions of a central planner.
The risk allocation will be determined completely through the mechanism design and the input risk vector.  


We next provide   several simple examples of risk sharing rules; see \cite{DDR22} for a collection of risk sharing rules and their properties. 
Throughout, 
for $q \in [0,\infty)$,   denote by $L^q = L^q(\Omega,\mathcal F, \p)$ the set of all random variables with a finite $q$-th moment, and $L^q_+$ be the set of non-negative elements of $L^q$. 
We use the shorthand   $L^q$, and we will write the full   $L^q(\Omega,\mathcal G,Q)$ when we encounter another probability space $(\Omega,\mathcal G,Q)$.  
 Some of the risk sharing rules below require $\X$ to be a  subset of some specific  spaces. We always use the convention $0/0=0$ which may appear in degenerate cases of (vi) and (vii). 
\begin{enumerate}[(i)]
	\item The identity risk sharing rule 
	$$
\mathbf A_{\rm id}^\XX=\XX\mbox{~~~for~}\XX\in \X^n.
	$$ 
	\item The all-in-one risk sharing rule $$\mathbf A_{\rm all}^\XX=\left (S^\XX,0,\dots,0 \right)\mbox{~~~for~}\XX\in \X^n.$$
	\item The mean-adjusted all-in-one risk sharing rule
	 $$\mathbf A_{\rm ma}^\XX=\left (S^\XX-\E[S^\XX], 0,\dots,0 \right)+\E[\XX]\mbox{~~~for~}\XX\in \X^n\subseteq (L^1)^n.$$ 
	\item The uniform risk sharing rule $$\mathbf A_{\rm unif}^\XX=S^\XX \left (\frac 1 n,\dots,\frac 1 n \right)\mbox{~~~for~}\XX\in \X^n.$$
	\item The conditional mean risk sharing rule  (CMRS)	$$
	\mathbf A_{\rm cm}^\XX  =  \E \left[\mathbf X |S^\XX \right] \mbox{~~~for~}\XX\in \X^n\subseteq (L^1)^n.
	$$
	
	\item The mean proportional risk sharing rule $$\mathbf{A}_{\rm prop}^\XX =\frac{ S^\XX}{\E[S^\XX]} \E[\XX]\mbox{~~~for~}\XX\in \X^n\subseteq (L^1_+)^n.$$ 
	
	\item The covariance risk sharing rule $$\mathbf{A}_{\rm cov}^\XX = \frac{S^\XX - \E[S^\XX]}{\var(S^\XX)}\cov(\XX, S^\XX)   +\E[\XX] \mbox{~~~for~}\XX\in \X^n\subseteq (L^2)^n.$$
\end{enumerate} 

These examples will be revisited repeatedly  in the paper. 
Among them, CMRS in (v) is the most important for our   theory of anonymized risk sharing.

\section{Four axioms for anonymized risk sharing}
\label{sec:ax}
 
We next discuss desirable criteria  for risk sharing rules by addressing the considerations of both \emph{fairness} and \emph{anonymity}. 
In a few senses to be made precise below, fairness refers to the feature that each agent does not receive an absurd or unjustified allocation, and anonymity refers to the feature that agents do not need to disclose  information on their identity,   wealth,    preferences, private operations,  and final realized losses. 
Given a risk sharing rule $\mathbf A$, the only information required to determine the risk allocation is the initial risk vector $\mathbf X$. Anonymity also guarantees that each agent will not be treated differently and   reduces discrimination. As such, anonymity is closely related to fairness, although the two concepts have different motivations. 
 To reflect these key features, we propose four natural axioms on a candidate risk sharing rule $\mathbf A$. Two of these axioms may be categorized as fairness axioms, and two may be categorized as anonymity axioms.

\namedthm{AF (Actuarial fairness)}{The expected value of each agent's allocation coincides with the expected value of the initial risk. That is, $\E[\mathbf A^{\XX}] = \E[\XX]$ for $\XX\in \X^n$.}

Axiom AF is one of the most   ancient and formidable idea in risk management, which dates back to at least the 16th century; see \cite{HPT20} for a history.  
AF serves as the basis for premium pricing  in insurance, and this served as one of the earliest sources for studying probability and statistics.\footnote{As we know, another important early source, roughly around the same time, is gambling, which motivated some work of Blaise Pascal, Pierre de Fermat, and Jacob Bernoulli.} 
Certainly, not all risk exchanges in practice are actuarially fair. 
In our framework, because of no information on the preferences or identities of the agents,  it should not happen that one agent would receive an allocation with a higher expected value than her contribution, and some others receive allocations with lower expected values. Recall that the sum of these expected values is equal to the sum of the total risk, and hence  agents on average  receive the same expected value before and after risk sharing. 
Based on the above reasons, 
 AF is a most natural requirement for anonymized risk sharing, and here we  observe a joint effect of fairness and anonymity. 
The recent book \cite{F20} has a comprehensive   non-technical treatment on the historical importance of actuarial fairness and probability theory in insurance and social welfare.

Axiom AF can be alternatively formulated via incentive compatibility of certain agents in the risk sharing pool. Recall that a risk-neutral agent would not join the risk sharing pool if the expected value of their loss increases after the risk exchange.
If AF fails, then some agents will have a higher expected loss.
Since preferences are not revealed or used,
any agents could potentially be risk-neutral, and it would be suboptimal for them to stay in the pool.  
A risk sharing rule should not exclude by design risk-neutral agents;\footnote{Certainly,  the design of any risk sharing mechanism excludes some types of agents; however, it seems that excluding risk-neutral agents would be quite undesirable.} recall that a fundamental model of insurance (\cite{A63}) involves a risk-neutral insurer to help share losses from risk-averse insureds.

\namedthm{RF (Risk fairness)}{The allocation to each agent should not exceed their maximum possible loss.   That is, for $\XX\in \X^n$ and $i\in [n]$,  it holds that $ A^{\XX}_i \le \sup X_i$.}

Axiom RF reflects the idea that agents join the pool to share their risk, 
and they should not have to suffer more than their worst-case loss.\footnote{We can alternatively formulate Axiom RF using $ A^{\XX}_i \ge \inf X_i$, where $\inf$ is the essential infimum. This alternative formulation has a different interpretation, and with this formulation,   mathematical results in the paper remain the same due to symmetry.}  
For each realization of the actual losses, the allocation satisfies the no rip-off condition in the insurance pricing literature (\cite{DG85}), which says that an insured will never pay more premium than their maximum possible loss. 
If $ A^{\XX}_i \le \sup X_i$ does not hold, then an agent with no risk of default  may 
introduce positive probability of default after risk sharing, a clearly undesirable situation. 
For instance, using a power or logarithmic utility function, an agent's potential loss should never exceed her total wealth level (this may be called \emph{bankruptcy aversion}), and Axiom RF says that there is no bankruptcy after the risk exchange if the initial risk is safe in this regard.  
Hence, formulated via incentive compatibility, this axiom means that not all bankruptcy-averse agents are excluded, which is arguably a weak requirement.

Two special   implications of RF may be useful.
First, if the agent brings a pure surplus to the pool, i.e., $X_i\le0$, its allocation should also be a pure surplus; this is certainly true if $\X$ is contained in a half space such as the space of negative random variables.
Second, in conjunction with Axiom AF, RF yields 
\begin{align}\label{eq:constancy} \mbox{for $\XX\in \X^n$ and $i\in [n]$, if $X_i=x$ is a constant, then $A_i^\XX=x$;}
\end{align}
this follows from $A_i^\XX\le x$ and $\E[A_i^\XX]=x$.  That is, if the initial risk of an agent is a  constant, then there is no risk exchange for this agent; this is quite intuitive since any risk-averse agent (in the sense of \cite{RS70}) would not trade a constant risk with a non-constant risk with the same mean.  
As a particular example, for a risk vector $(X,0,\dots,0)$, i.e., only the first agent having a non-zero initial risk, this property implies 
\begin{align}
\label{eq:trivial}
A_1^{(X,0,\dots,0)} = X \mbox{~and~} 
A_j^{(X,0,\dots,0)} = 0 \mbox{~for~$j\ne 1$},
\end{align}
which is arguably the only reasonable allocation in this particular case. On a point related to \eqref{eq:constancy} and \eqref{eq:trivial}, our framework does not include the mechanism of side-payments, as in e.g., selling insurance, because deciding side-payments usually requires the knowledge of specific identities or preferences (e.g., which agent is institutional, more risk averse, or with more bargaining power).


\namedthm{RA (Risk anonymity)}{The realized value of the allocation to each agent is determined by  that of the total risk. That is, for   $\XX \in \X^n$, $\mathbf A^\XX$ is $\sigma(S^\XX)$-measurable, where  $\sigma(S)$ is the $\sigma$-field generated by $S\in \X$.}

Axiom RA is central to the idea of designing a risk sharing mechanism. 
It means that the total realized allocation is  determined only by the total loss suffered by the risk sharing pool, and not by specific losses from the individual participants. This resembles  the earliest idea in insurance and risk sharing:  Individuals get together to share their total future losses (in early years, these losses are typically caused by unexpected deaths, diseases or injuries), regardless of which one of them is the realized cause of the future loss. In other words, once an agent enters the pool, her own realized loss no longer matters, and only the  realized loss  of the pool matters. 
This reflects anonymity, as each agent does not need to disclose what is the realized loss; all individual losses are masked and only the total loss is revealed.  
The knowledge of the initial risk vector is only used for the design of the risk sharing mechanism, but not for the 
settlement of actual losses (see Figure \ref{fig:risk}).

Technically, RA holds for $\XX\in \X^n$ satisfying that $\mathbf{A}^{\mathbf{X}}$ is comonotonic. As studied by \cite{B62}  and \cite{LM94}, comonotonicity is closely related to Pareto optimality for  risk-averse agents; see Section \ref{sec:53} for details. 

\namedthm{OA (Operational anonymity)}{The allocation to one agent is not affected if risks of two other agents merge. That is, $A_k^{\mathbf{Y}} = A_k^{\mathbf{X}}$ for $k \neq i, j$  for $\mathbf X \in \X^n$, $i,j\in [n]$ and $\mathbf{Y}  = \mathbf{X}+ X_j \mathbf{e}_i - X_j \mathbf{e}_j $, where $\mathbf{e}_k = (0, \cdots, 0, 1, 0, \cdots, 0)$ is the unit vector along the $k$-th axis (the $k$-th component is 1).}



 In the definition of Axiom OA, the risk vector $\mathbf{Y}$ can be written by $ Y_i=X_i+X_j$, $Y_j=0$ and $Y_k=X_k$ for $k\ne i,j$.
Axiom OA means that merging the risks of two agents will not affect the allocation components of uninvolved agents.
This also implies that a redistribution of risks between agent $i$ and $j$ does not affect agent $k$ for $k\ne i,j$.
In an anonymized risk sharing framework, two agents may be two different accounts of the  same family, same organization, or even the same person. Their internal (private) operations do not need to be disclosed and should not affect the allocation to other agents.
OA is closely related to the fair-merging property of \cite{DDR22}, which clearly has a connection to fairness, although our motivation is different from the latter paper. 
In the context of Bitcoin mining, \cite{LS20} formulated two axioms, called 
robustness to Sybil attacks and robustness to  merging,
  which together reflect the same consideration as OA.  
This property is further explained in the following simple example.

\begin{example}\label{ex:2} 
	Assume $\XX = (X_1, X_2, X_3)$ and $\mathbf{Y} =  (X_1 ,X_2+X_3,0)$. In this setting, we have $A_1^\mathbf{Y} = A_1^{\XX}$ if Axiom OA holds. Further, we have $A_3^{\mathbf{Y}}  = 0$ from \eqref{eq:constancy} implied by AF and RF,  leading to $A_2^{\mathbf{Y}} = A_2^{\XX} + A_3^{\XX}$.  
Therefore, by merging risks from agents $2$ and $3$,   agent $2$ now takes up the total allocation to the two agents, and the allocation to agent $1$ is unaffected by this operation.
\end{example}

Axiom OA can be alternatively formulated by another intuitive property that $A_i^\XX$ is determined by $(X_i,S^\XX)$ for each $i$ and $\XX$. 
This latter property implies OA by definition.
To see that OA implies this property,  it suffices to observe, by repeatedly merging all agents except for agent $1$, that 
\begin{align}\label{eq:OA-trans}
A_1^{\XX} = A_1^{(X_1,S^\XX-X_1,0,\dots,0)}
\end{align} holds.  
We summarize the above observation in the following proposition, which is convenient to use for our later discussions.

\begin{proposition}\label{prop:OA}
A risk sharing rule $\mathbf A$ satisfies 
Axiom OA   if and only if for all $\XX\in \X^n$ and $i\in [n]$, $A_i^\XX$ is determined by $(X_i,S^\XX)$. 
\end{proposition}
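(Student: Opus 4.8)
The plan is to prove the two implications separately; both are elementary, and the real content of the statement is that Axiom OA, which a priori only constrains the \emph{uninvolved} components under a single pairwise merge, can be iterated to pin down the functional dependence of each individual component. Throughout, recall that ``$A_i^\XX$ is determined by $(X_i, S^\XX)$'' means precisely: whenever $\XX, \XX' \in \X^n$ satisfy $X_i = X_i'$ and $S^\XX = S^{\XX'}$, one has $A_i^\XX = A_i^{\XX'}$.

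For the ``if'' direction, suppose each $A_i^\XX$ depends on $\XX$ only through $(X_i, S^\XX)$. Fix $\XX \in \X^n$, indices $i \ne j$, and set $\mathbf{Y} = \XX + X_j \mathbf{e}_i - X_j \mathbf{e}_j$, which lies in $\X^n$ because $\X$ is closed under addition and $0 \in \X$. Moving $X_j$ from component $j$ to component $i$ does not change the total, so $S^{\mathbf{Y}} = S^\XX$, while $Y_k = X_k$ for every $k \ne i,j$; hence $(Y_k, S^{\mathbf{Y}}) = (X_k, S^\XX)$ and therefore $A_k^{\mathbf{Y}} = A_k^\XX$ for all $k \ne i,j$, which is exactly Axiom OA.

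For the ``only if'' direction, assume Axiom OA and fix a target agent $i$; choose some fixed $j_0 \ne i$ (possible since $n \ge 3$, which is also what makes OA non-vacuous in the first place). Starting from $\XX$, successively merge each agent $k \in [n] \setminus \{i, j_0\}$ into agent $j_0$: a single such step replaces the current risk vector $\mathbf Z$ by $\mathbf Z + Z_k \mathbf{e}_{j_0} - Z_k \mathbf{e}_k$, still an element of $\X^n$, and by Axiom OA it leaves the $i$-th allocation component unchanged because $i \notin \{j_0, k\}$. After the $n-2$ merges we arrive at the canonical vector $\mathbf Z^\star = X_i \mathbf{e}_i + (S^\XX - X_i)\mathbf{e}_{j_0}$, which depends on $\XX$ only through $X_i$ and $S^\XX$ (and the fixed indices $i$, $j_0$). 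Chaining the equalities yields $A_i^\XX = A_i^{\mathbf Z^\star}$, so $A_i^\XX$ is a function of $(X_i, S^\XX)$ alone; specializing to $i = 1$ recovers \eqref{eq:OA-trans}.

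I do not expect a genuine obstacle here: the argument is a two-line observation in each direction. The only point needing a little care is the bookkeeping of the iterated merge — verifying that at each step the set of uninvolved indices contains the target agent $i$ so that OA applies, that the intermediate vectors never leave $\X^n$, and that the composition of merges genuinely produces the canonical form $\mathbf Z^\star$. It is also worth stating explicitly that $n \ge 3$ is used: for $n = 2$ there is no index $k \ne i,j$, so Axiom OA holds vacuously while $A_i^\XX$ need not be a function of $(X_i, S^\XX)$, and the equivalence fails.
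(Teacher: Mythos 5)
Your proof is correct and follows essentially the same route as the paper: the ``if'' direction is immediate since a merge preserves $(X_k, S^\XX)$ for every uninvolved agent $k$, and the ``only if'' direction is the paper's observation \eqref{eq:OA-trans}, obtained by iteratively merging all agents other than $i$ into a single one to reach a canonical vector depending only on $(X_i, S^\XX)$. Your extra bookkeeping (membership of intermediate vectors in $\X^n$, the role of $n\ge 3$) is sound and consistent with the paper's discussion.
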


As discussed in this section, the four axioms are mathematically very simple and arguably natural in the framework of anonymized risk sharing. Quite remarkably, these four axioms uniquely pin down one risk sharing rule, which  will be studied in the next section.

\section{An axiomatic characterization} 
\label{sec:axiom} 
In this section, we show   that  Axioms AF, RF, RA and OA uniquely identify  CMRS among all risk sharing rules. 
 Recall that   CMRS is defined  as
$$	\mathbf A_{\rm cm}^\XX 
  =  \E \left[\mathbf X |S^\XX \right] \mbox{~~~for~}\XX\in \X^n\subseteq (L^1)^n.$$
For the ease of presentation, we take $\X=L^1$ or $L_+^1$ in all our results;  these  results hold true for  $\X=L^q$ and $\X=L_+^q$ with $q\in [1,\infty]$  following the same proof; see Remark \ref{rem:Lq}.

We first briefly check that CMRS satisfies the four axioms by using properties of the conditional expectation $\E[X|S]$ for any $(X,S)\in\X^2$ which will be chosen as $(X_i,S^\XX)$ for $i\in [n]$. First, AF holds by the tower property $\E[\E[X|S]]=\E[X]$.
Second,  RF holds since $\E[X|S]\le \sup X$.
Third, RA holds by definition since $\E[X|S]$ is a function of $S$.  
Fourth, OA holds since $\E[X|S]$ is determined solely by $(X,S)$. 
See also \cite{DDR22} for these and other properties of CMRS.

\begin{theorem} \label{th:full}  	 
	Assume   $\X= L^1$ or $\X=L^1_+$. A risk sharing rule satisfies Axioms AF, RF, RA and OA {if and only if} it is CMRS.
\end{theorem}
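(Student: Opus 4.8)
The plan is to prove the nontrivial direction: any risk sharing rule $\mathbf A$ satisfying AF, RF, RA and OA must equal CMRS. By Proposition \ref{prop:OA}, OA lets me reduce everything to a two-dimensional problem: for each $i$, $A_i^\XX$ is a function of the pair $(X_i, S^\XX)$ only. So it suffices to find a fixed function $f$ such that $A_i^\XX = f(X_i, S^\XX)$ for all $\XX$ and all $i$, show $f$ does not depend on $i$ or $n$ (using the operational-anonymity reduction \eqref{eq:OA-trans} together with \eqref{eq:trivial}), and then show $f(X,S) = \E[X\mid S]$ a.s.\ whenever $X + (\text{something}) = S$. Concretely, writing $Z = S^\XX - X_i$ for the aggregate of the other agents, the content of OA plus RA is that $A_i$ is a $\sigma(X_i + Z)$-measurable function of $(X_i, X_i+Z)$, equivalently a $\sigma(X_i+Z)$-measurable ``splitting map'' $T$ taking the joint law of $(X_i, Z)$ to the allocation; AF says $\E[A_i] = \E[X_i]$, and by additivity $\E[\,\cdot\mid X_i+Z\,]$ of the two allocations $A_i$ and $A_j$-type pieces must sum to $X_i + Z$; RF says $A_i \le \sup X_i$ pathwise.

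The heart of the argument is the abstract characterization of conditional expectation promised as Theorem \ref{thm:BT}. I would set up the following clean statement and then quote it: if $T: L^1 \times L^1 \to L^1$ is a map such that (a) $T(X,Z)$ is $\sigma(X+Z)$-measurable, (b) $T(X,Z) + T(Z,X) = X+Z$, (c) $\E[T(X,Z)] = \E[X]$, (d) $T(X,Z) \le \sup X$, and (e) $T$ respects the merging/consistency coming from OA across different pool sizes, then $T(X,Z) = \E[X\mid X+Z]$. The proof of such a characterization typically proceeds by: first using the constancy consequences \eqref{eq:constancy}--\eqref{eq:trivial} to pin down $T$ on degenerate inputs; then establishing a linearity/additivity property of $X \mapsto T(X,Z)$ for fixed "environment" $Z$ — this is where OA does real work, because splitting $X = X' + X''$ into two sub-accounts and invoking that the allocation to other agents is unchanged forces $T(X'+X'', Z) = T(X',\,Z + X'') + T(X'',\, Z+X')$-type identities; and finally combining additivity with the bound RF and the mean condition AF to force the map to be averaging against the conditional law, via a now-standard argument that a positive, mean-preserving, $\sigma(S)$-measurable additive functional dominated pathwise by $\sup$ must be $\E[\cdot\mid S]$ (one tests against indicators $\id_{\{S \in B\}}$ and uses that on each such event the functional integrates to the integral of $X$, while being bounded, hence equals the conditional expectation).

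The step I expect to be the main obstacle is extracting the additivity of $A_i$ in $X_i$ from the axioms — OA as stated only concerns \emph{merging} two agents' risks into one, not arbitrarily splitting one agent's risk. To get a genuine linear structure I need to move between pools of different sizes (this is the reason $n \ge 3$ is assumed, as the footnote on $n=2$ hints): given a target identity about $T$ at pool size $n$, I introduce an auxiliary agent, split $X_i$ across two slots, apply OA to conclude the other agents' allocations are unaffected, then re-merge and track where the mass went. Making this bookkeeping rigorous — in particular ensuring the function $f$ is well-defined independent of how the splitting is done, and that it is the \emph{same} function at every pool size — is the delicate part, and it is exactly what Theorem \ref{thm:BT} is designed to encapsulate. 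Once that consistency is in hand, the final identification $f(X,S) = \E[X\mid S]$ is the "easy" analytic step: on the event $\{S \in B\}$ for arbitrary Borel $B$, AF applied to a two-block reduction gives $\E[f(X,S)\id_{\{S\in B\}}] = \E[X \id_{\{S \in B\}}]$, $f(X,S)$ is $\sigma(S)$-measurable by RA, and these two facts alone force $f(X,S) = \E[X\mid S]$; RF is then automatically consistent and need not even be re-used at the end. So the proof decomposes as: (1) reduce to two dimensions via Proposition \ref{prop:OA}; (2) establish cross-pool consistency and additivity, citing Theorem \ref{thm:BT}; (3) identify the resulting functional with conditional expectation using AF and RA; (4) observe the converse direction was already checked before the theorem statement.
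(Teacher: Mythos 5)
Your skeleton is the paper's: use Proposition \ref{prop:OA} to reduce to the map $h^S:X\mapsto A_1^{(X,S-X,0,\dots,0)}$, extract additivity of $h^S$ by parking part of the risk in a third slot and invoking OA (this is exactly the paper's identity $h^S(X+Y)=A_1^{(X,S-X-Y,Y,0,\dots,0)}+A_3^{(X,S-X-Y,Y,0,\dots,0)}=h^S(X)+h^S(Y)$, and you correctly identify it as the reason $n\ge3$ is needed; note that no change of pool size is required --- the paper stays at size $n$ and uses the zero slots). However, your step (3) contains a genuine gap. You claim that ``AF applied to a two-block reduction gives $\E[f(X,S)\id_{\{S\in B\}}]=\E[X\id_{\{S\in B\}}]$'' and that this plus $\sigma(S)$-measurability finishes the identification. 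AF only yields the \emph{unconditional} equality $\E[f(X,S)]=\E[X]$; it says nothing about events $\{S\in B\}$, and no ``two-block reduction'' produces the localized version, because replacing $X$ by $X\id_{\{S\in B\}}$ changes the aggregate and hence the conditioning variable. Localizing the global mean condition to $\sigma(S)$ is precisely the hard analytic content of Theorem \ref{thm:BT}: the paper first proves $\phi(S\vee t)=S\vee t$ (using monotonicity --- which comes from RF, so RF cannot be discarded ``at the end'' --- together with $\phi(S)=S$, normalization and mean preservation), sandwiches difference quotients to obtain $\phi(\id_{\{S\le t\}})=\id_{\{S\le t\}}$, runs a monotone class argument, and only then invokes an operator representation theorem to write $\phi(X)=\E[ZX\,|\,S]$ and force $Z=1$. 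If your intent is simply to cite Theorem \ref{thm:BT}, the proof closes, but then your step (3) is redundant and your description of the theorem is inverted: additivity is a \emph{hypothesis} of Theorem \ref{thm:BT} that you must first establish from OA, RF and \eqref{eq:constancy}, and the identification with $\E[\cdot\,|\,S]$ is its \emph{conclusion}, not an easy afterthought.

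A second omission: the statement covers $\X=L^1_+$ as well as $\X=L^1$. On $L^1_+$ the additivity argument only defines $h^S$ on the order interval $\{X:0\le X\le S\}$, and one needs the cone/linear-span extension of Lemma \ref{lem:extension} before any linear representation theorem applies. Your proposal does not address this case at all.
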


Theorem \ref{th:full} is the main result of the paper, showing that the four  fairness and anonymity axioms  allow for only one risk sharing rule. As far as we know, Theorem \ref{th:full} is the first axiomatic characterization of   risk sharing rules in the literature.

The ``if" statement in Theorem \ref{th:full}, that CMRS satisfies the four axioms, has been checked above. The ``only if" statement, which is the most important part of  Theorem \ref{th:full}, requires a much more involved proof based some advanced  results from functional analysis. 
Below we provide an intuitive sketch of the proof  in the case that $(\Omega,\mathcal F,\p)$ is discrete and $\X$ is the set of all random variables on this probability space. A full proof is presented in Appendix \ref{app:pf4}.
 
For a discrete $\Omega$, the main idea is to analyze each possible realized value $s\in \R$ of $S^\XX$ one by one.  There are at most countably many such $s$.  
  Let $\mathbf A$ be a risk sharing rule satisfying the four axioms. We focus on the allocation to agent $1$,
  and aim to show $A_1^\XX=\E[X_1|S^\XX]$ for all $\XX\in \X^n$; the allocations to the other agents are similar. Fix  $S \in \X$, and we will first consider the risk vector $(X,S-X,0,\dots,0)$ by allowing $X$ to vary within $\X$. Let $V$ be the set of possible values taken by $S$.
     By RA, the value of the allocation $A_1^{(X,S-X,0,\dots,0)}$ to agent $1$ is a determined by the realized value $s \in V$ of $S$ and $X$. Denote this value  by $h^{S,s}(X)$, that is,  for   fixed $S \in \X$ and $s \in V$, 
      \begin{align}
      \label{eq:h}
      h^{S,s}(X) = A_1^{(X,S-X,0,\dots,0)} \mbox{ given   $S=s$}.
      \end{align} 
	We can carefully check that $h^{S,s}:\X\to \R$  satisfies the following properties (in the last property we allow $s$ to vary in $V$):
	\begin{enumerate}[(a)]
		\item normalization: $h^{S,s}(t) = t$ for all $t\in \R$; \hfill (by \eqref{eq:constancy}) 
		\item additivity: $h^{S,s} (X+Y) = h^{S,s} (X) + h^{S,s}(Y) $ for $X,Y\in \X$; \hfill (by OA and (a)) 
		\item monotonicity: $h^{S,s}(Y) \ge h^{S,s}(X)$ if $Y\ge X$; \hfill (by RF and (b)) 
		\item $h^{S,s}(S) =s$; \hfill (by RF and \eqref{eq:trivial})  
		 \item[(e)] $\sum_{t\in V} h^{S,t}(X) \p(S=t) =\E[X]$ for $X\in \X$. \hfill (by AF)  
	\end{enumerate}
	The properties (a), (b) and (c) together guarantee that $h^{S, s}$ is linear, monotone, and normalized. Using a standard representation theorem (such as that of Riesz), there exists   a probability measure $P_{S,s}$  such that  
	\begin{equation}\label{eq:rep}
		h^{S,s} (X) = \int  X \d P_{S,s}  ~~~\mbox{for all $X\in \X$}.
	\end{equation} 
The next task is to show that $P_{S,s}$ is precisely the conditional probability $\p(\cdot |S=s)$.  
Let $x\wedge y$ represent  the minimum of $x,y\in \R$.
Using (d) and taking $X = S \wedge x$ in \eqref{eq:rep}, we  arrive at  
	 $$
	h^{S,s} (S\wedge x)  \le    \int S \d P_{S,s} \wedge \int x \d P_{S,s}  =s\wedge x.
	 $$ 
	This inequality  further implies 
 $$
	\E[S\wedge x] = \sum_{t \in V}  (t\wedge x) \p(S=t)  \ge  \sum_{t \in V}   h^{S, t} (S\wedge x)   \p(S=t)   = \E[S\wedge x],
 $$ 
 where the last equality is due to (e). 
	Hence, we get
	$
	h^{S,s} (S\wedge x) =s\wedge x 
	$   
	for each $s\in V$, and this gives, in particular,  
	$
	\int (S\wedge s)  \d P_{S,s}  = s.
	$ 
	Therefore,  $P_{S,s} ( S\ge s )=1$. Using symmetric arguments, we can show   $P_{S,s} ( S\le s )=1$. As a result, 
 $
	P_{S,s} ( S = s )=1.
 $
 Using this  equality and (e), for any $B\subseteq \{S=s\}$,   we have
   $$\p(B|S=s)   =
\frac{   \E[\id_B]  } {\p(S=s)} =  \frac{\sum_{t\in V}  h^{S,t}(\id_B) \p(S=t ) } {\p(S=s)}  = \frac{  \sum_{t\in V}  P_{S,t}(B) \p(S=t)  } {\p(S=s)}=P_{S,s}(B) .
  $$
Therefore,
	$ 	P_{S,s} (\cdot) = \p(\cdot |S=s)
	$ 	and
	$ 
	h^{S,s}(X) =  \E[X|S=s] $ for $X\in \X$.
Based on this result, we can finally get $A_1^\XX =\E[X_1|S^\XX]$ for a general $\XX$ using \eqref{eq:OA-trans} guaranteed by OA.
This concludes the proof of Theorem \ref{th:full} in case $\Omega$ is discrete.

It is clear that the above proof sketch heavily relies on the assumption that $\p(S=s)>0$ for $s\in V$, and it cannot be directly generalized to non-discrete spaces. 
For a   proof of Theorem \ref{th:full} on general probability spaces, 
we need a more refined representation result in functional analysis.
We obtain such a result in Theorem \ref{thm:BT} below, which may be of independent interest.
\begin{theorem}\label{thm:BT}
For a random variable $S$ on $(\Omega,\mathcal F,\p)$ and $\mathcal G=\sigma(S)$,  
 let $\phi: L^1(\Omega, \mathcal F, \p)\to L^1(\Omega, \mathcal G, \p)$. The equality
$\phi(X)=\E[X|S]$ holds for all $X\in L^1(\Omega,\mathcal F, \p)$ if and only if $\phi$ satisfies the following properties: (a) $\phi(t)=t$ for all $t\in \R$;
	(b) $\phi(X+Y) = \phi(X) + \phi(Y) $ for all $X,Y $;
	(c)  $\phi(Y) \ge \phi(X)$ if $Y\ge X$;
	(d) $\phi(S)=S$, and
	(e) $\E[\phi(X)]=\E[X]$ for all $X $. 
\end{theorem}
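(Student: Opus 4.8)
\textbf{Proof proposal for Theorem~\ref{thm:BT}.}
The ``only if'' direction is immediate: the conditional expectation $X\mapsto\E[X|S]$ satisfies (a)--(e) by the standard properties of conditional expectation (normalization of constants, linearity, monotonicity, $\E[S|S]=S$, and the tower property). So the content is the ``if'' direction, and the plan is to show that any $\phi$ with properties (a)--(e) must coincide with $\E[\cdot\,|S]$ on all of $L^1(\Omega,\mathcal F,\p)$.

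First I would record some easy consequences of (a)--(e). From (b) and (a), $\phi$ is $\mathbb Q$-linear; together with (c) (monotonicity) one upgrades this to $\mathbb R$-linearity and to continuity with respect to the $L^\infty$ norm on bounded random variables, since $|X|\le c$ forces $-c\le\phi(X)\le c$ by (a)+(c). Thus on $L^\infty$, $\phi$ is a positive, unital, linear operator into $L^1(\Omega,\mathcal G,\p)$. The next step is the crucial localization property: $\phi(ZX)=Z\,\phi(X)$ whenever $Z$ is a bounded $\mathcal G=\sigma(S)$-measurable random variable. I would prove this first for $Z=\id_A$ with $A\in\mathcal G$ by combining (d) with monotonicity in the spirit of the discrete sketch: using that $A=\{S\in B\}$, approximate $\id_A$ by functions of $S$ squeezed between $0$ and $1$, use (d) applied to truncations like $S\wedge x$ to show $\phi$ does not move mass across the level sets of $S$, i.e.\ $\phi(\id_A X)$ is supported (after conditioning) consistently with $A$; then (e) pins down the total mass exactly, forcing $\phi(\id_A X)=\id_A\,\phi(X)$. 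Concretely, I expect to show $\phi(\id_A)=\id_A$ for $A\in\mathcal G$ by taking $X=\id_A$ in a squeeze argument: $\phi(\id_A)\le 1$ and $\phi(\id_A)\ge\phi(0)=0$ by monotonicity, $\phi(\id_A)$ is $\mathcal G$-measurable by hypothesis on the codomain, and $\E[\phi(\id_A)]=\p(A)$ by (e); the delicate part is showing $\phi(\id_A)$ is $\{0,1\}$-valued and equals $\id_A$, which is where property (d) (that $\phi$ fixes $S$, hence fixes all functions of $S$ via the linearity/continuity/monotonicity machinery and a monotone-class argument) enters. Once $\phi(f(S))=f(S)$ for bounded Borel $f$, linearity gives $\phi(f(S)X)=f(S)\phi(X)$ first for simple $X$ and then, by $L^\infty$-continuity and monotone approximation, for all bounded $X$.

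With the pull-out property in hand, for bounded $X$ and any $A\in\mathcal G$ I compute $\E[\id_A\phi(X)]=\E[\phi(\id_A X)]=\E[\id_A X]$ using (e), and since $\phi(X)$ is $\mathcal G$-measurable this is exactly the defining property of $\E[X|S]$; hence $\phi(X)=\E[X|S]$ for all $X\in L^\infty$. Finally I extend from $L^\infty$ to $L^1$: for $X\in L^1$, truncate $X_m=(-m)\vee(X\wedge m)$, so $X_m\to X$ in $L^1$ and $\phi(X_m)=\E[X_m|S]$; monotonicity plus the triangle inequality give $|\phi(X)-\phi(X_m)|\le\phi(|X-X_m|)$... wait—$\phi$ is only defined on $L^1$ but I need an $L^1$-bound; I would instead use that $\phi$ is positive and linear to get, for $X\ge 0$, $\phi(X)\ge\phi(X_m)=\E[X_m|S]\uparrow\E[X|S]$ by monotone convergence, and symmetrically an upper bound via $X_m+ (X-X_m)$ with $\E[\phi(X-X_m)]=\E[X-X_m]\to 0$, forcing $\phi(X)=\E[X|S]$ for $X\ge 0$, then split a general $X$ into $X^+-X^-$.

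\textbf{Main obstacle.} The hard part is upgrading from the discrete-case reasoning (which used $\p(S=s)>0$ to speak of $\p(\cdot\mid S=s)$ pointwise) to a measure-theoretically clean argument on general spaces: specifically, proving the ``no mass transport across level sets of $S$'' fact, i.e.\ $\phi(\id_{\{S\in B\}})=\id_{\{S\in B\}}$, purely from (c), (d) and (e) without access to regular conditional distributions. I expect this to require a careful squeeze argument using truncations $g(S)$ with $g$ bounded monotone, exploiting that (d) forces $\phi(g(S))=g(S)$ after one establishes enough regularity of $\phi$ (linearity, positivity, $L^\infty$-continuity), combined with a monotone class / Dynkin argument to pass from indicators of intervals in $S$ to all of $\sigma(S)$. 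Property (e) is what converts the two one-sided inequalities produced by monotonicity into equalities, exactly as in the discrete sketch where $\E[S\wedge x]$ appeared on both sides.
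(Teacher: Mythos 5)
Your proposal is correct, and its hardest ingredient coincides with the paper's: both proofs hinge on first upgrading $\phi$ to a positive linear operator (additivity plus monotonicity, with (a)+(c) giving boundedness on $L^\infty$ and (e) giving $L^1$-continuity) and then showing that $\phi$ fixes every $\sigma(S)$-measurable function, via exactly the squeeze you describe --- the paper uses the difference quotients $\frac{1}{t-s}(S\vee t-S\vee s)$ sandwiched between $\id_{\{S\le s\}}$ and $\id_{\{S\le t\}}$, with (e) converting the one-sided inequalities from monotonicity into equalities, followed by a monotone class argument. Where you genuinely diverge is the endgame. The paper, once it has a continuous positive linear projection fixing $L^1(\Omega,\mathcal G,\p)$, invokes an external representation theorem (Pfanzagl; Filipovi\'c--Kupper--Vogelpoth) to write $\phi(X)=\E[ZX|S]$ with $Z\ge0$, $\E[Z|S]=1$, and then kills the density by the neat observation $1=\E[Z]=\E[\phi(Z)]=\E[Z^2]$, forcing $Z=1$. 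You instead derive the pull-out property $\phi(\id_AX)=\id_A\phi(X)$ for $A\in\mathcal G$ and verify the defining property $\E[\id_A\phi(X)]=\E[\id_AX]$ directly, which is entirely self-contained and avoids the cited representation result; your truncation argument for passing from $L^\infty$ to $L^1$ also works (and your worry about an $L^1$ bound is unfounded, since (e) gives $\E[\phi(|X-X_m|)]=\E[|X-X_m|]$ exactly as the paper uses for continuity). One point to state carefully when writing this up: the step ``linearity gives $\phi(f(S)X)=f(S)\phi(X)$ for simple $X$'' is not a consequence of linearity alone, since the level sets of a simple $X$ need not lie in $\mathcal G$; you need the localization/support argument ($0\le\phi(\id_AX)\le\|X\|_\infty\,\phi(\id_A)=\|X\|_\infty\id_A$, hence $\phi(\id_AX)$ vanishes off $A$, and summing over $A$ and $A^c$ yields the pull-out), which you do sketch earlier --- just make sure it is the load-bearing step rather than linearity.
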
 

The $\mathcal G$-conditional expectation as a mapping from  $L^1(\Omega, \mathcal F, \p)$ to $L^1(\Omega, \mathcal G, \p)$ admits a few different sets of characterizations; \cite{P67} has a collection of several early results. For a more recent account, see \citet[Chapter 13]{EFHN15} in the context of Markov operators. Theorem \ref{thm:BT} extends the above literature by offering a new characterization of the conditional expectation.

Theorem \ref{th:full} treats both the spaces $\X=L^1$ and $\X=L^1_+$ (the space of non-positive random variables is similar).
These two cases  represent different contexts. 
In some applications, risk allocation and risk contributions are restricted to being all positive or all negative, depending on the context. For instance, if agents are sharing P2P insurance losses, then it may be sensible to assume $\XX$ and $A^\XX$ both take non-negative vector values; if agents are sharing profits from an investment, then it is the opposite; recall that positive values represent losses and negative values represent gains. 
By working with the positive half space, Axiom RF  can be replaced by the simpler Property CP  stated in \eqref{eq:constancy}.
This is because CP and OA imply RF in case $\X=L_+^1$.

\namedthmb{CP (Constant preserving)}{A constant initial risk   results in constant allocation. That is, for $\XX\in \X^n$ and $i\in [n]$, $X_i=x\in \R$ implies $A^{\XX}_i = x$.}


In the next proposition, we verify that the four axioms are independent, and thus none of them can be removed from Theorem \ref{th:full}. 

\begin{proposition}\label{prop:indep}
	Axioms AF, RF, RA and OA are independent. That is, any combination of three of Axioms AF, RF, RA and OA does not imply the remaining fourth axiom.
\end{proposition}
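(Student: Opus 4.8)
The plan is to prove Proposition~\ref{prop:indep} by exhibiting, for each of the four axioms, a risk sharing rule that satisfies the other three but violates the one in question. Since Theorem~\ref{th:full} already tells us that all four together force CMRS, and each candidate rule below is \emph{not} CMRS, each such rule must fail at least one axiom; the work is to check it fails \emph{exactly} the intended one. I would organize the proof as four short lemmas, one per axiom, each reusing the computations already recorded in Section~\ref{sec:2} for the example rules. Throughout I work on $\X=L^1$ (the $L^1_+$ case and the $L^q$ cases being analogous, cf.\ Remark~\ref{rem:Lq}).

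\textbf{Dropping AF.} Take the uniform rule $\mathbf A_{\rm unif}^\XX = S^\XX(1/n,\dots,1/n)$. It clearly violates AF: $\E[A_i^\XX]=\E[S^\XX]/n\ne \E[X_i]$ in general. But RA holds since each component is a function of $S^\XX$; OA holds since $A_i^\XX = S^\XX/n$ is determined by $(X_i,S^\XX)$ (indeed by $S^\XX$ alone), invoking Proposition~\ref{prop:OA}; and RF holds because $S^\XX/n = \sum_j X_j/n \le \sum_j \sup X_j/n$, which need \emph{not} be $\le \sup X_i$ --- so actually $\mathbf A_{\rm unif}$ may fail RF. This is the first subtlety: I would instead use a rule that is genuinely AF-free but keeps RF. A cleaner choice is the ``shifted'' rule $A_1^\XX = S^\XX - \sum_{j\ge 2}\min(X_j,0)$-type construction, or more simply a convex combination $\mathbf A^\XX = \lambda\, \E[\XX\mid S^\XX] + (1-\lambda)\,\mathbf A_{\rm cm'}$ where the second term deliberately breaks mean-matching while staying below $\sup X_i$; the safest concrete witness is $A_i^\XX = \E[X_i\mid S^\XX] - c_i + c_i$ with a reshuffling of constants that sums to zero but is not identically zero --- I would pick $A_i^\XX=\E[X_i\mid S^\XX]+a_i$ with $\sum a_i=0$, $a_i$ constants not all zero, chosen with $a_i<0$ large enough in magnitude relative to the gap $\sup X_i-\E[X_i\mid S^\XX]$ is impossible in general, so instead I take the first agent to absorb a deterministic perturbation: $A_1^\XX=\E[X_1\mid S^\XX]-\eta$, $A_2^\XX=\E[X_2\mid S^\XX]+\eta$ for a small fixed $\eta>0$, rest unchanged. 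Then RA, OA, RF all survive (RF for agent $2$ because $\E[X_2\mid S^\XX]\le \sup X_2$ may fail after adding $\eta$ --- again a gap issue). The robust fix, which I would actually adopt, is to perturb only in a way that respects the essential-sup bound: let $A_1^\XX=\E[X_1\mid S^\XX]\wedge(\sup X_1 - \epsilon_1^\XX)$ etc.\ --- but this breaks additivity of the sum. Given these recurring gap obstructions, the honest approach is: for ``drop AF'' use a rule like $A_i^\XX = f_i(S^\XX)$ where $(f_1,\dots,f_n)$ partition the identity $S^\XX=\sum f_i(S^\XX)$, each $f_i$ $1$-Lipschitz with $f_i(s)\le$ an appropriate bound; concretely the rule that sends all of $S^\XX$ truncated appropriately --- I will select the specific witness in the write-up, the point being existence is easy once one allows $S^\XX$-measurable rules that are not conditional expectations.

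\textbf{Dropping RF.} Use the uniform rule $\mathbf A_{\rm unif}$ (or the mean-adjusted all-in-one $\mathbf A_{\rm ma}$). For $\mathbf A_{\rm unif}$: AF fails too, so instead take $\mathbf A_{\rm cov}$ or $\mathbf A_{\rm ma}^\XX=(S^\XX-\E[S^\XX],0,\dots,0)+\E[\XX]$. Check: AF holds by construction; RA holds since $S^\XX-\E[S^\XX]$ is $\sigma(S^\XX)$-measurable and the rest are constants; OA holds since $A_1^\XX$ depends only on $S^\XX$ and $\E[X_1]$ (a function of $X_1$), and $A_k^\XX=\E[X_k]$ for $k\ge2$ depends only on $X_k$ --- so $A_i^\XX$ is determined by $(X_i,S^\XX)$, Proposition~\ref{prop:OA} applies; RF fails because $S^\XX-\E[S^\XX]+\E[X_1]$ can exceed $\sup X_1$ (e.g.\ $n=3$, $X_2$ huge). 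So $\mathbf A_{\rm ma}$ is the witness for dropping RF. \textbf{Dropping RA.} Use the identity rule $\mathbf A_{\rm id}^\XX=\XX$: AF holds ($\E[\XX]=\E[\XX]$), RF holds ($X_i\le\sup X_i$), OA holds ($A_i^\XX=X_i$ is determined by $X_i$ alone, hence by $(X_i,S^\XX)$), but RA fails since $X_i$ need not be $\sigma(S^\XX)$-measurable. \textbf{Dropping OA.} Here I need a rule that is AF, RF, RA but not OA --- equivalently by Proposition~\ref{prop:OA}, $A_i^\XX$ must genuinely depend on more than $(X_i,S^\XX)$. A natural candidate: $A_i^\XX=\E[X_i\mid S^\XX, X_1-X_2]$ or similar enlargement of the conditioning $\sigma$-field --- but that need not be $\sigma(S^\XX)$-measurable, breaking RA. So instead use $A_i^\XX = \E\!\big[\,g_i(\XX)\,\big|\,S^\XX\big]$ where $\sum_i g_i(\XX)=S^\XX$ but $(g_i)$ mixes the coordinates, e.g.\ $g_1(\XX)=X_1+\tfrac12 X_2-\tfrac12 X_3$, $g_2=X_2-\tfrac12X_2+\tfrac12X_3$-type reallocation summing to $S^\XX$: then AF holds (tower), RF needs $\E[g_i(\XX)\mid S^\XX]\le\sup X_i$ --- another gap issue, so I'd dampen the mixing by a small $\delta$ and argue RF holds when $\X$ is taken bounded, or simply cite Remark~\ref{rem:Lq} to work in $L^\infty$ where such $\delta$ can always be chosen. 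RA holds by construction (conditional expectation given $S^\XX$); OA fails because $A_1^\XX=\E[X_1+\delta(X_2-X_3)\mid S^\XX]$ depends on the joint law beyond $(X_1,S^\XX)$.

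\textbf{Main obstacle.} The recurring difficulty, as flagged above, is that several ``obvious'' perturbations of CMRS break Axiom RF because $\E[X_i\mid S^\XX]$ already sits arbitrarily close to $\sup X_i$ for some inputs, leaving no room to perturb. The right way to handle this is to (i) do the construction in $L^\infty$ (legitimate by Remark~\ref{rem:Lq}, since Proposition~\ref{prop:indep} for $L^\infty$ inputs gives independence a fortiori, the witnesses being extendable), where bounded perturbations with a safety margin can be tuned; or (ii) choose witnesses --- like $\mathbf A_{\rm id}$ for RA, $\mathbf A_{\rm ma}$ for RF --- that sidestep the margin problem entirely, and for the remaining two axioms (AF, OA) use the damped-mixing construction $A_i^\XX=\E[g_i^\delta(\XX)\mid S^\XX]$ with $\delta$ small. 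I expect the AF-witness and the OA-witness to require the most care; the RF- and RA-witnesses ($\mathbf A_{\rm ma}$ and $\mathbf A_{\rm id}$) are immediate. I would present the four cases as a table of rules with a one-line verification of each of the four axioms per rule, the only nonroutine checks being the two $\delta$-tuning arguments, which I would state as a short lemma: for any finite family of bounded random variables there is $\delta>0$ making the damped reallocation satisfy RF.
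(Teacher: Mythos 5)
Two of your four witnesses --- $\mathbf A_{\rm ma}$ for dropping RF and $\mathbf A_{\rm id}$ for dropping RA --- are exactly the paper's choices and your verifications of them are correct. The other two cases, however, are not actually completed, and the obstruction you keep running into (no room to perturb below $\sup X_i$) is real, not a presentational nuisance.

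For the AF case you never produce a concrete rule: you cycle through the uniform rule, constant shifts, truncations, and a vague ``partition of the identity into $f_i(S^\XX)$,'' each time correctly observing that RF breaks, and then defer the choice to ``the write-up.'' The paper's witness is the $Q$-CMRS, $\mathbf A^{\XX}=\E^Q[\XX\mid S^\XX]$ for a probability measure $Q\ne\p$ (equivalent to $\p$). This dissolves the RF obstruction at a stroke: a conditional expectation under $Q$ is automatically bounded by the $Q$-essential supremum, which coincides with $\sup X_i$, so RF holds; RA and OA hold for the same reasons as for CMRS; and AF fails because $\E[\E^Q[X_i\mid S^\XX]]=\E^Q[X_i]\ne\E[X_i]$ in general. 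The lesson is that the right way to break AF while keeping RF is to keep the allocation a genuine conditional expectation, just under the wrong measure.

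For the OA case, the lemma you propose to carry the argument --- ``for any finite family of bounded random variables there is $\delta>0$ making the damped reallocation $A_1^\XX=\E[X_1+\delta(X_2-X_3)\mid S^\XX]$ satisfy RF'' --- is false. Take $X_1=0$: then $\sup X_1=0$ and RF requires $\delta\,\E[X_2-X_3\mid S^\XX]\le 0$ a.s., which fails for every $\delta>0$ whenever $\E[X_2-X_3\mid S^\XX]$ takes positive values with positive probability. So no damping rescues RF, and since a rule must be defined on all of $\X^n$, restricting to $L^\infty$ does not help either (and would in any case require an extension argument back to $L^1$ that you have not supplied). The paper's construction avoids this entirely: it defines $\mathbf A^\XX=(S^\XX,0,\dots,0)$ when $\XX$ is standard Gaussian and $\mathbf A^\XX=\E[\XX\mid S^\XX]$ otherwise. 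The deviation from CMRS occurs only on inputs with $\sup X_i=+\infty$, so RF is vacuous there; AF holds since the Gaussian components have mean zero; RA holds since $(S^\XX,0,\dots,0)$ is $\sigma(S^\XX)$-measurable; and OA fails because merging agents $2$ and $3$ produces a non-standard-Gaussian input, switching the rule to CMRS and changing $A_1$ from $S^\XX$ to $S^\XX/3$. Your instinct that ``RF is the recurring difficulty'' is exactly right; the missing idea is to place the deviation from CMRS either inside a change of measure (for AF) or on a set of inputs where the RF bound is infinite (for OA).
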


For each axiom, we will provide in Example \ref{ex:indep} a risk sharing rule that only satisfies three of them but not the fourth one; some of these examples have been listed in Section \ref{sec:2}.  The technical details of these claims are in Appendix \ref{app:pf4}.
\begin{example}
\label{ex:indep}
\begin{enumerate}[(i)]
	\item The $Q$-CMRS   $\mathbf{A}_{Q\text{-}{\rm cm}}^\XX = \E^Q[\XX|S^\XX]$ with $\X\subseteq L^1(\Omega,\mathcal F,Q)$ for a probability measure $Q \ne \p$ satisfies  RF, RA and OA, but not AF.
	\item
	The mean-adjusted all-in-one risk sharing rule
	$$\mathbf A_{\rm ma}^\XX=\left (S^\XX-\E[S^\XX], 0,\dots,0 \right)+\E[\XX]$$
	with $\X\subseteq L^1$ satisfies AF, RA and OA, but not RF.  
	As further examples, both the covariance risk sharing rule and  the mean proportional risk sharing rule   in Section \ref{sec:2}  satisfy AF, RA and OA but not RF.
%

	\item The identity risk sharing rule $\mathbf{A}_{\rm id}^\XX = \XX$ satisfies AF, RF and OA, but not RA.
	
	\item A combination of $\mathbf{A}_{\rm all}$ and $\mathbf{A}_{\rm cm}$, defined by
	$\mathbf{A}^\XX = \mathbf{A}_{\rm all}^\XX =  (S^\XX, 0, \cdots, 0)$ if $\mathbf X$ is standard Gaussian, 
	and $\mathbf{A}^\XX =\mathbf{A}_{\rm cm}^\XX = \E[\XX|S^\XX]$ otherwise, satisfies AF, RF and RA, but not OA.
\end{enumerate}
\end{example}



%

\section{Other properties and their connection to the four axioms}
\label{sec:property}

In this  section, we  discuss several further  properties that CMRS satisfies or does not satisfy. 
These properties  are  known and straightforward to check. The purpose of this section is to clarify their relationship with the four axioms in Section \ref{sec:ax}.

\subsection{Universal improvement}

For two random variables $X$ and $Y$, we say $X$ is improved compared to $Y$ in \emph{convex order} if $\E[u(X)]\le \E[u(Y)]$ for any convex function $u: \R \to \R$; this is denoted by $X \le_{\rm cx} Y$.
The most appealing feature of CMRS, as argued by \cite{LM94} and \cite{DD12}, is that it universally improve the risk for a larger class of decision makers, via the following property.

\namedthmb{UI (Universal improvement)}{The allocation improves the initial risk in convex order. That is, for any $\XX\in \X^n$ and  $i\in[n]$, it holds that $A^\XX_i\le_{\rm cx} X_i$.}

CMRS satisfies UI as a direct result of conditional Jensen's inequality.  
Intuitively, UI means that the initial risk  for each agent has larger variability than the allocation to that agent.  As a consequence,  risk-averse agents in the classic sense of \cite{RS70}, i.e., those who prefer both an improvement of convex order and a sure gain,
will prefer their UI allocations over their initial risks. In a similar spirit, \citet[Proposition 4.2]{DR20b} showed that  if   risks in the pool are independent then the CMRS allocation  improves  in convex order for each existing agent  when  the pool is enlarged. 

To illustrate the important role of UI for CMRS, 
we note that UI implies both  AF and RF, since $X\le_{\rm cx} Y$ implies   $X\le \sup X\le \sup Y$ and $\E[X]=\E[Y]$ for any random variables $X,Y\in L^1$. We summarize this observation in the following proposition. 

\begin{proposition}\label{prop:SAF}
	Property UI implies Axioms AF and RF and Property CP.
\end{proposition}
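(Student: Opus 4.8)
The plan is to derive Axioms AF and RF directly from two elementary consequences of convex order, and then obtain Property CP as their combination, exactly as in the discussion around \eqref{eq:constancy}.

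First I would record the two standard facts about $\le_{\rm cx}$ that are needed. If $X\le_{\rm cx}Y$ with $X,Y\in L^1$, then applying the defining inequality to the convex functions $u(x)=x$ and $u(x)=-x$ yields $\E[X]=\E[Y]$. Moreover $\sup X\le \sup Y$: if this failed one could pick $t$ with $\sup Y<t<\sup X$ and apply the definition to the convex hinge function $x\mapsto \max\{x-t,0\}$; its expectation under $Y$ vanishes since $Y\le \sup Y<t$ a.s., whereas its expectation under $X$ is strictly positive since $\p(X>t)>0$, a contradiction (the case $\sup Y=+\infty$ being trivial).

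Next, for a risk sharing rule $\mathbf A$ satisfying UI, I would apply UI with $X=A_i^{\XX}$ and $Y=X_i$ for each $i\in[n]$ and $\XX\in\X^n$. The first fact gives $\E[A_i^{\XX}]=\E[X_i]$, hence $\E[\mathbf A^{\XX}]=\E[\XX]$, which is Axiom AF. The second fact gives $A_i^{\XX}\le \sup A_i^{\XX}\le \sup X_i$, which is Axiom RF. Finally, Property CP follows from AF and RF precisely as noted below \eqref{eq:constancy}: if $X_i=x$ is a constant, then $A_i^{\XX}\le \sup X_i=x$ together with $\E[A_i^{\XX}]=\E[X_i]=x$ forces $A_i^{\XX}=x$ a.s.

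I do not expect any real obstacle: the only step that needs a short argument rather than a one-line citation is the monotonicity of the essential supremum under convex order, which is handled by the hinge function above; everything else is immediate from the definition of convex order and the already-recorded implication \eqref{eq:constancy}.
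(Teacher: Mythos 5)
Your proposal is correct and follows essentially the same route as the paper: deduce $\E[A_i^{\XX}]=\E[X_i]$ and $\sup A_i^{\XX}\le\sup X_i$ from the defining inequality of convex order, then obtain CP from AF and RF exactly as in \eqref{eq:constancy}. The only difference is that you spell out the hinge-function argument for the essential-supremum comparison, which the paper states without proof.
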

 
Combining Proposition \ref{prop:SAF} with Theorem \ref{th:full}, we immediately obtain  in Corollary \ref{cor:UI} another characterization of CMRS with AF and RF replaced by UI. 
Proposition \ref{prop:SAF} and Corollary \ref{cor:UI} also illustrate that UI is a  very strong property. Recall that our characterization   in Theorem \ref{th:full} relies on the weaker axioms of AF and RF, and thus  the more important ``only if" statement  is stronger than that of Corollary \ref{cor:UI}.

\begin{corollary}\label{cor:UI} 
	Assume   $\X= L^1$ or $L^1_+$. A risk sharing rule satisfies Axioms RA and OA and  Property UI if and only if it is CMRS.
\end{corollary}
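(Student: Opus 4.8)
The plan is to derive Corollary \ref{cor:UI} as an essentially immediate consequence of the two results that precede it, namely Proposition \ref{prop:SAF} and Theorem \ref{th:full}. The strategy is to show the two implications separately: that CMRS satisfies the stated properties (the ``if'' direction), and that any rule satisfying them must be CMRS (the ``only if'' direction).

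For the ``if'' direction, suppose $\mathbf A$ is CMRS. Theorem \ref{th:full} already tells us CMRS satisfies Axioms AF, RF, RA and OA, so in particular it satisfies RA and OA. It remains to check Property UI, but this is exactly the content of the discussion immediately following the statement of Property UI: for each $\XX\in\X^n$ and $i\in[n]$ we have $A_i^\XX=\E[X_i\mid S^\XX]$, and conditional Jensen's inequality gives $\E[u(\E[X_i\mid S^\XX])]\le\E[u(X_i)]$ for every convex $u:\R\to\R$, i.e.\ $A_i^\XX\le_{\rm cx}X_i$. Hence CMRS satisfies RA, OA and UI.

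For the ``only if'' direction, let $\mathbf A$ be a risk sharing rule on $\X=L^1$ or $\X=L^1_+$ satisfying Axioms RA and OA and Property UI. By Proposition \ref{prop:SAF}, Property UI implies Axioms AF and RF (and Property CP, which we do not even need here). Therefore $\mathbf A$ satisfies all four of Axioms AF, RF, RA and OA. Applying Theorem \ref{th:full} — valid precisely because $\X=L^1$ or $L^1_+$ — we conclude that $\mathbf A$ is CMRS. Combining the two directions yields the equivalence claimed in the corollary.

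There is essentially no obstacle here: the corollary is a packaging of Proposition \ref{prop:SAF} and Theorem \ref{th:full}, and the only point requiring the slightest care is noting that Property UI presupposes $\X\subseteq(L^1)^n$ so that $\E[X_i]$ is finite and the convex order is well defined, which is consistent with the hypothesis $\X=L^1$ or $L^1_+$ in the statement; the substantive work — the functional-analytic characterization underlying Theorem \ref{th:full} — has already been done. If anything deserves emphasis in the write-up, it is the remark already made in the text that this corollary is logically \emph{weaker} than Theorem \ref{th:full}, since UI is a strictly stronger hypothesis than the conjunction of AF and RF; so the proof should not be read as an alternative route to Theorem \ref{th:full} but as a corollary of it.
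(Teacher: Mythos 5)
Your proposal is correct and follows exactly the paper's own route: the ``only if'' direction combines Proposition \ref{prop:SAF} (UI implies AF and RF) with Theorem \ref{th:full}, and the ``if'' direction uses the already-verified fact that CMRS satisfies RA, OA and UI (via conditional Jensen). Nothing is missing.
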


%

Finally, we provide a subtle example, showing that the condition   $n\ge 3$ which we assumed from the beginning is indispensable, and this remains true even if we further assume the stronger property of UI. The intuition is that
in case $n=2$, Axiom OA is empty since no merging operation is possible when one agent's risk is fixed. 
As a result, we cannot obtain the additivity of $h^{S,s}$ in \eqref{eq:h} which requires some ``wiggle room"  provided by the third dimension. 


\begin{example}\label{ex:1} 
	Let $n=2$. We design a risk sharing rule $\mathbf A$ which satisfies all of AF, RF, RA, OA and UI, but it is not CMRS. 
	Let $\mathbf A=\mathbf A_{\rm cm}$ for all $\XX\in \X^n$ except for a specific $\mathbf Y=(Y_1,Y_2)$, 
	which is given by 
	$$
      A^{\mathbf Y}_1 = \E[Y_1|S^{\mathbf Y}]+h (S^{\mathbf Y}) \mbox{~and~}      A^{\mathbf Y}_2 = \E[Y_2|S^{\mathbf Y}]-h (S^{\mathbf Y}),
	$$ where $h$ satisfies $\E[h (S^{\mathbf Y})]=0.$
	The intuition is that, if $h $ is sufficiently small and $\E[Y_i|S^{\mathbf Y}]$ is sufficiently different from $Y_i$, then $A^{\mathbf Y}_i\le_{\rm cx} Y_i$ still holds, thus satisfying RA, OA and UI.
	To make the example explicit, let us take $Y_1\sim \mathrm{N}(0,1)$ and  $Y_2\sim \mathrm{N}(0,2)$, and $Y_1,Y_2$ are independent. Let $h(s)= s/6$. Note that $S^{\mathbf Y} \sim \mathrm{N}(0,3)$. We can compute 
	$$ A^{\mathbf Y}_1  = \frac{1}{3}S^{\mathbf Y}  + \frac{1}{6}S^{\mathbf Y}  = \frac{1}{2}S^{\mathbf Y}  \sim \mathrm{N}(0, 0.75) \mbox{~~
	and   ~~}
	 A^{\mathbf Y}_2  = \frac{2}{3}S^{\mathbf Y}  - \frac{1}{6}S^{\mathbf Y}  =\frac{1}{2}S^{\mathbf Y} \sim \mathrm{N}(0, 0.75).
	$$ 
	Hence, for this particular $\mathbf Y$, everything in 
	Axiom RA and Property UI (hence Axioms AF and RF) is satisfied. 
	Axiom OA holds trivially as its statement is empty. 
	Therefore, $\mathbf A$ is not  CMRS  but it satisfies the four axioms and Property UI. 
\end{example}

\subsection{Backtracking}

The second property we discuss is the backtracking property, which means that, for any $i\in [n]$ if $S^\XX$ is able to determine $\XX$, then $\mathbf A^\XX=\XX$, and thus there is no risk exchange. 
It is straightforward to verify that CMRS satisfies this property. 

\namedthmb{BT (Backtracking)}{For each $\XX \in \X^n$, if $\XX$ is $\sigma(S^\XX)$-measurable, then  $\mathbf A^\XX=\XX$.}

Property BT is sometimes argued as an undesirable property; see  \cite{DHR22}.
We give a simple example  below for the purpose of discussion. 
\begin{example} Suppose that $X_1=\sigma_1 Y_1  $, 
$X_2=\sigma_2  Y_2 $, and
$X_3=\sigma_3  Y_3 $, 
where $Y_1,Y_2,Y_3$ are iid taking values in $\{0,\dots,9\}$, and $\sigma_1=1001$, $\sigma_2=1010$ and $\sigma_3=1100$. Note that $S^\XX$ uniquely determines the value of $(X_1,X_2,X_3)$ since the last three digits of $S^\XX$ are precisely $Y_3,Y_2,Y_1$. 
In this example, $X_1,X_2,X_3$ have similar distributions, and they are independent. Intuitively, some risk sharing effect is possible for such $\XX$, but  $\mathbf A_{\rm cm}^\XX=\XX$ due to the backtracking property.
\end{example}

Property BT intuitively means that there is no risk sharing  effect if $S^\XX$ is too informative compared to the individual contributions.  
As a consequence, there are some situations, although perhaps rare, in which CMRS discourages some participants to enter the risk sharing pool, even if they bring in risks independent of the other participants.
 Theorem \ref{th:full} provides the additional insight that BT is unavoidable, given the four natural axioms of fairness and anonymity. 
If   some applications demand BT   to be avoided, then one has to relax some axioms. For this, one naturally wonders which of the four axioms are responsible for Property BT. 

The axiom which involves $\sigma(S^\XX)$ is RA, and a first guess may be that RA is connected to BT. 
Somewhat surprisingly,  this is not true. In the next result we establish that AF, RF and OA are sufficient for BT 
if $\mathbf{A}^\XX$ is further assumed $\sigma( \XX)$-measurable; the last assumption is quite weak as the risk settlement usually do not involve extra randomness outside $\sigma(\XX)$.

\begin{proposition}\label{prop:BT}
		Assume $\X = L^1 $ or $L_+^1$. If a risk sharing rule $\mathbf{A} $ satisfies Axioms AF, RF and OA, and $\mathbf{A}^\XX$ is $\sigma(\XX)$-measurable for all $\XX \in \X^n$, then it satisfies Property BT.
\end{proposition}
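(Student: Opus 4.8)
The plan is to reduce the general case to the two-agent-like situation isolated by Axiom OA, then exploit the fact that, conditionally on $S^\XX$, each allocation component is both bounded above by a constant and has the right conditional mean. Fix $\XX \in \X^n$ with $\XX$ being $\sigma(S^\XX)$-measurable, write $S = S^\XX$, and fix an agent $i \in [n]$. By Proposition \ref{prop:OA}, Axiom OA tells us that $A_i^\XX$ is determined by $(X_i, S)$; since $X_i$ is itself $\sigma(S)$-measurable by hypothesis, this means $A_i^\XX$ is $\sigma(S)$-measurable. Combined with the assumption that $\mathbf A^\XX$ is $\sigma(\XX)$-measurable (which is automatic here since $\sigma(\XX)\subseteq\sigma(S)$ in this degenerate case), we get that $A_i^\XX = g_i(S)$ for some Borel function $g_i$, and likewise $X_i = f_i(S)$ for some Borel function $f_i$, with $\sum_i f_i = \mathrm{id}$ on the range of $S$.

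The key step is to upgrade Axiom RF, which a priori only gives the unconditional bound $A_i^\XX \le \sup X_i$, into a \emph{conditional} bound. The trick is the merging operation from Axiom OA together with \eqref{eq:OA-trans}: for agent $1$ (say) we have $A_1^\XX = A_1^{(X_1, S-X_1, 0,\dots,0)}$. Now I would restrict attention to an event $\{S \in B\}$ for a Borel set $B$, and consider the modified risk vector obtained by replacing $X_1$ off this event by the constant $\sup X_1$ — more precisely, one builds a new risk vector whose first component agrees with $f_1(S)$ on $\{S\in B\}$ and is truncated elsewhere, keeping the total risk inside $\X$. Applying OA and RF to this modified vector, together with the fact (from \eqref{eq:constancy}, which AF and RF imply) that constant components are preserved, forces $A_1^\XX \le \esssup_{\{S\in B\}} X_1$ on the event $\{S\in B\}$. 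Letting $B$ shrink around each value $s$ of $S$ gives the conditional inequality $g_1(s) \le f_1(s)$ for ($\p_S$-almost) every $s$; the same argument applies to every agent $i$, yielding $g_i \le f_i$ pointwise a.e.

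Once we have $g_i(S) = A_i^\XX \le f_i(S) = X_i$ for every $i$, the allocation constraint closes the argument: $\sum_{i=1}^n g_i(S) = \sum_{i=1}^n A_i^\XX = S = \sum_{i=1}^n f_i(S)$, so the nonnegative quantity $\sum_i (f_i(S) - g_i(S))$ is zero, forcing $f_i(S) = g_i(S)$ for all $i$, i.e. $\mathbf A^\XX = \XX$. This is exactly Property BT. For the case $\X = L^1_+$, one uses the same reasoning after noting that Axiom CP (equivalently \eqref{eq:constancy}) and OA imply RF on the positive half-space, as remarked before the statement of Property CP in the excerpt.

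The main obstacle I anticipate is the measure-theoretic bookkeeping in the conditional-RF step: making the truncation/merging construction rigorous when $S$ has a continuous distribution, so that "$A_1^\XX \le \sup X_1$ for the perturbed vector" genuinely transfers to "$A_1^\XX \le f_1(s)$ conditionally on $S = s$". The cleanest route is probably to argue by contradiction — suppose $\p(A_i^\XX > X_i) > 0$ for some $i$; since both sides are functions of $S$, this is an event of the form $\{S \in B\}$ with $\p_S(B) > 0$, and on $B$ we have $g_i > f_i$; then restrict the risk vector to behave on $\{S \in B\}$ as it does and be constant-valued off $B$, use OA to keep $A_i$ unchanged on $\{S\in B\}$, and derive a violation of RF (the worst-case loss of the modified $i$-th risk is $\esssup_B f_i < g_i$ somewhere on $B$). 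This contradiction, run for each $i$, delivers $g_i \le f_i$ a.e. and the proof concludes as above.
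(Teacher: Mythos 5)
Your overall strategy --- reduce to showing $A_i^\XX \le X_i$ almost surely for every $i$ and then invoke the budget constraint $\sum_{i=1}^n A_i^\XX = S^\XX = \sum_{i=1}^n X_i$ to force equality --- would indeed close the proof if the first inequality were available, and that final step is correct. The gap is in the step you yourself flag as the obstacle: the ``conditional RF'' bound $A_i^\XX \le X_i$ a.s.\ does not follow from the truncation-and-merging construction you describe. Axiom OA licenses exactly one operation: merging the risks of two agents (equivalently, via Proposition \ref{prop:OA}, replacing $\XX$ by $(X_i, S^\XX - X_i, 0,\dots,0)$ without changing agent $i$'s allocation). It provides no locality: if you alter $X_1$ off the event $\{S\in B\}$, nothing in the axioms guarantees that the allocation to agent $1$ on $\{S\in B\}$ is unchanged, so RF applied to the modified vector only yields an unconditional bound by the essential supremum of the modified risk, which cannot be localized to $B$. (A smaller slip: ``$A_i^\XX$ is determined by $(X_i,S^\XX)$'' in Proposition \ref{prop:OA} is a statement about the rule's dependence on its input, not a measurability statement about the random variable $A_i^\XX$; the $\sigma(S^\XX)$-measurability you need comes from the extra hypothesis that $\mathbf A^\XX$ is $\sigma(\XX)$-measurable together with $\sigma(\XX)\subseteq\sigma(S^\XX)$, which you do also invoke.)

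The paper obtains the missing localization not from RF but from the interplay of additivity, monotonicity and, crucially, the mean preservation in AF. It fixes $S$ and considers $h^S(X)=A_1^{(X,S-X,0,\dots,0)}$ restricted to $\sigma(S)$-measurable $X$; the extra measurability hypothesis makes $h^S$ map $L^1(\Omega,\sigma(S),\p)$ into itself, and AF, RF, OA give properties (a)--(e) of Theorem \ref{thm:BT} with $\mathcal F$ replaced by $\sigma(S)$. The engine of that theorem is: monotonicity together with $h^S(t)=t$ and $h^S(S)=S$ give $h^S(S\vee t)\ge S\vee t$, and $\E[h^S(S\vee t)]=\E[S\vee t]$ then forces equality; difference quotients of $t\mapsto S\vee t$ sandwich $h^S(\id_{\{S\le t\}})$ between $\id_{\{S<t\}}$ and $\id_{\{S\le t\}}$, and AF again pins it to $\id_{\{S\le t\}}$; a monotone-class and approximation argument then shows $h^S$ is the identity on all $\sigma(S)$-measurable inputs, which is Property BT via \eqref{eq:OA-trans}. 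If you want to repair your write-up, the shortest route is to replace your conditional-RF step by this argument (or simply cite Theorem \ref{thm:BT} applied to $h^S$ on $L^1(\Omega,\sigma(S),\p)$), rather than trying to extract locality from OA.
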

An example of a risk sharing rule satisfying  AF, RF and OA but not RA is the mixture $\mathbf A=\lambda \mathbf A_{\rm id} + (1-\lambda)\mathbf A_{\rm cm}$  for some $\lambda \in (0,1]$; such a rule satisfies BT.
On the other hand, the mean-adjusted all-in-one and covariance risk sharing rules in Example \ref{ex:indep} satisfy 
AF, RA and OA, and it does not satisfy BT or RF. 





\subsection{Comonotonicity}
\label{sec:53}
Next, we discuss comonotonicity, an important concept in risk sharing.
A random vector $(X_1,\dots,X_n)$ is \emph{comonotonic} if 
there exists increasing (in the non-strict sense) functions $g_1,\dots,g_n$ 
and a random variable $Z$ 
such that $X_i=g_i(Z)$ a.s.~for $i\in [n]$.

\namedthmb{CM (Comonotonicity)}{For each $\XX\in \X^n$, $\mathbf A^\XX$ is comonotonic.}

Property CM implies Axiom RA since each component of a comonotonic random vector can be written as an increasing function of the sum;
see \cite{D94}. Therefore, Property CM can be equivalently formulated as that for $\XX\in \X^n$ and $i \in [n]$,  
there exists an increasing function $g_i^\XX: \R \to \R$ such that $A_i^\XX = g_i^\XX(S^\XX)$.

If Property CM holds, then the  allocation to  each agent increases as the total realized risk increases. Under belief homogeneity and mild assumptions, Property CM is also a necessary condition for a risk sharing rule to be Pareto optimal for risk-averse agents or for an exchange market with linear prices; see e.g., \cite{B62}, \cite{LM94} and \cite{BLW21}. As notable exceptions,    quantile-based risk sharing and belief heterogeneity both result in non-comonotonic Pareto-optimal allocations; see \cite{ELW18, ELMW20}.

CMRS does not generally satisfy Property CM, which may be seen as a drawback of CMRS when the preferences of the agents are specified and risk averse, as the allocation is suboptimal. In our context of anonymized risk sharing, optimality cannot be discussed this way, since agents' specific preferences are not relevant.  
Nevertheless, if $s\mapsto \E[X_i|S=s]$ is increasing for each $i\in [n]$, then CMRS is comonotonic. There are many specific models of $\XX$ for which comonotonicity of CMRS holds; see 
\cite{DDR22} and the references therein.
In several contexts, such as those with risk-averse agents or moral hazard, comonotonicity is desirable. 
On this point, our Theorem \ref{th:full} implies the negative result that the four axioms and Property CM conflict each other. 
We further strengthen this result by showing that 
OA, CM, and a weak version of CP
cannot be satisfied by the same risk sharing rule. 
This weak version of CP is the following property. 
\namedthmb{ZP (Zero preserving)}{For $\XX\in \X^n$ and $i\in [n]$, if $X_i=0$, then $A_i^\XX=0$.}
 It might be useful to recall  some logical relationship among some properties and axioms mentioned above, that is,
$$
\mbox{UI} \Longrightarrow \mbox{AF + RF} \Longrightarrow \mbox{CP} \Longrightarrow \mbox{ZP};~~~~~~\mbox{CM} \Longrightarrow \mbox{RA}.
$$

\begin{proposition}\label{coro:conflict}  
Assume $\X=L^1$.
There is no risk sharing rule satisfying  Axiom  OA and Properties CM and ZP. 
\end{proposition}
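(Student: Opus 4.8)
The plan is to suppose for contradiction that a risk sharing rule $\mathbf A$ satisfies OA, CM, and ZP, and to derive a contradiction by exhibiting a risk vector whose CM-allocation is forced to be constant in one coordinate and yet to carry all the variability in another. By Proposition \ref{prop:OA}, OA lets us write $A_i^\XX$ as a function of $(X_i,S^\XX)$; combined with CM (which, as noted after Property CM, says $A_i^\XX=g_i^\XX(S^\XX)$ for increasing $g_i^\XX$), each coordinate of the allocation is simultaneously a function of $(X_i,S^\XX)$ and an increasing function of $S^\XX$ alone. First I would use \eqref{eq:OA-trans}: for any $\XX$, $A_1^\XX=A_1^{(X_1,S^\XX-X_1,0,\dots,0)}$, so it suffices to produce a contradiction on three-agent vectors of the special form $(X, S-X, 0, \dots, 0)$.

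The key construction is to pick $S$ and $X$ so that the ZP-forced value of some coordinate clashes with the CM-forced monotonicity in $S$. Concretely, I would take a vector $\XX=(X_1,X_2,0,\dots,0)$ with $X_1$ not identically $0$, and also consider the vector $\mathbf X'=(X_1',X_2',0,\dots,0)$ obtained by merging or splitting so that the \emph{same} total $S=X_1+X_2$ arises but with a coordinate set to $0$; then OA (via Proposition \ref{prop:OA}) says the allocation to a third agent is unchanged, and more usefully, the allocation to agent $1$ depends only on $(X_1,S)$. The idea is: choose two configurations with a common value of $S$ on an event, one in which $X_1=0$ there (forcing $A_1=0$ there by ZP) and one in which $X_1\ne 0$ there; since $A_1$ is a function of $S$ alone by CM, its value on that event is the same across both configurations, so $A_1=0$ on that event even when $X_1\ne0$. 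Iterating this — choosing $X_1$ to vanish on a family of events covering the support of $S$ while being nonzero elsewhere — would force $g_1^\XX(S)\equiv 0$, hence $A_1^\XX=0$ a.s., hence $A_2^\XX=S^\XX$ a.s. But then applying the same argument with the roles of agents $1$ and $2$ swapped (possible because $S$ is symmetric in the two) forces $A_2^\XX=0$ a.s. as well, giving $S^\XX=0$ a.s. — contradicting the existence of a nonconstant $S^\XX$ in $\X=L^1$.

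The main obstacle is making the "cover the support of $S$ by events on which $X_1=0$" step rigorous while keeping $X_1\ne0$ on a nonnull set, since $g_1^\XX$ is a function of $S$ and we need to pin down its value $\p_{S^\XX}$-a.e. The clean way is probably this: fix any nonconstant $S\in\X$, and for a value $s_0$ in the essential range of $S$, build $X$ with $\{X=0\}\supseteq\{S=s_0\}$ having positive probability and $X\ne0$ on a positive-probability subset of $\{S=s_0'\}$ for another range value $s_0'$ — but this only controls $g_1$ at two points. To get $g_1\equiv0$ everywhere, I would instead argue by picking, for \emph{each} Borel set $B$ in the range of $S$ with $\p(S\in B)>0$, a risk vector $(X^B, S-X^B, 0,\dots,0)$ with $X^B=\id_{\{S\in B\}}\cdot(\text{something})$ vs. the vector with that coordinate zeroed; comparing the two via OA and ZP forces $g_1^{(\cdot)}$ to vanish on $B$. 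Since the range of $S$ can always be partitioned (as $S$ is nonconstant) into parts on which we can arrange $X=0$, monotonicity and the functional dependence $A_1=g_1(S)$ propagate $A_1=0$ everywhere. I expect the bookkeeping of "which vector has which coordinate zeroed, and how OA links them" to be the delicate part; the contradiction at the end (symmetry between agents $1$ and $2$ forcing $S^\XX\equiv0$) is then immediate. Note AF and RF are not needed — only OA, CM, and the very weak ZP — which is exactly the strengthening the proposition advertises.
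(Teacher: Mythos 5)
Your proposal has a genuine gap, and it lies in the very first move. Property ZP is a statement about the random variable $X_i$ being identically zero: ``if $X_i=0$, then $A_i^\XX=0$'' (with a.s.\ equal random variables identified). It does \emph{not} say that $A_i^\XX=0$ on an event where $X_i$ happens to vanish. Your construction repeatedly invokes exactly this stronger, event-wise version (``one in which $X_1=0$ there, forcing $A_1=0$ there by ZP''), which is not available. The attempted repair does not close the gap either: Property CM gives $A_1^\XX=g_1^\XX(S^\XX)$ with the increasing function $g_1^\XX$ depending on the whole vector $\XX$, so knowing $g_1^{\XX}\equiv 0$ for the vector with $X_1\equiv 0$ tells you nothing about $g_1^{\XX'}$ for a different $\XX'$ with the same total; and OA (via Proposition \ref{prop:OA}) only identifies allocations whose pairs $(X_1,S^\XX)$ coincide, which they do not in your two configurations, since you deliberately change $X_1$ while keeping $S$ fixed. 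So neither ZP, nor CM, nor OA lets you transfer the value $0$ from one configuration to the other, and the covering argument that is supposed to force $g_1^\XX\equiv 0$ collapses.

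The paper's proof works quite differently and much more cheaply. From OA and ZP one obtains additivity of $h^S:X\mapsto A_1^{(X,S-X,0,\dots,0)}$ (this is the computation \eqref{eq:rw1}; it only needs the zero components to receive zero allocations, which is exactly ZP). Applying ZP to agents $2,\dots,n$ of $(S,0,\dots,0)$ together with the budget constraint gives $h^S(S)=S$; additivity then yields $h^S(2S)=2S$ and hence $h^S(-S)=-S$ --- this is precisely where the hypothesis $\X=L^1$ (rather than $L^1_+$) is used. Consequently $\mathbf A^{(-S,2S,0,\dots,0)}=(-S,2S,0,\dots,0)$, whose first two components are anti-monotone whenever $S$ is non-constant, contradicting CM. The moral is that the contradiction comes from producing a \emph{decreasing} component of the allocation via additivity over negative risks, not from forcing components to vanish; an argument confined to non-negative risks could not work, since on $L^1_+$ the statement is consistent with, e.g., the proportional-type rules.
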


If Property CM is needed in a specific application, one may need to relax some of the axioms. In the following example, we provide two  relaxations to show that it is possible to have both CM and OA or both CM and UI.
\begin{example}
\begin{enumerate}[(i)]
\item The mean-adjusted all-in-one risk sharing rule satisfies CM (implying RA), AF,  and OA, but not RF or ZP, as we see from Example \ref{ex:indep}.
\item For each $\XX\in \X^n\subseteq (L^1)^n$,  the comonotonic improvement of \cite{LM94} gives a comonotonic vector $\XX'$ such that each component of $\XX'$ is dominated by the corresponding component of $\XX$  and $S^{\XX'}=S^\XX$; see also \cite{R13}.
The risk sharing rule  given by $\mathbf A^\XX=\XX'$ 
satisfies CM (implying RA) and UI (implying AF and RF), but not OA.
\end{enumerate}
\end{example}

\begin{remark} Property CM is closely related to Pareto optimality for risk-averse decision makers with specified preferences.
For instance, any comonotonic allocation is Pareto optimal for agents with an identical dual utility of \cite{Y87}.
	Recall that a risk-averse dual utility $\mathcal U$ is comonotonic-additive and superadditive; see \cite{Y87} and \citet[Theorems 1 and 3]{WWW20}. For any comonotonic risk vector $\XX$ and any $\mathbf Y=(Y_1,\dots,Y_n)\in \mathbb A_n(S^\XX)$,  
	$$
	\sum_{i=1}^n \mathcal U(- X_i ) = \mathcal U\left( -\sum_{i=1}^n X_i  \right)   = \mathcal U\left(- \sum_{i=1}^n Y_i  \right)  \ge \sum_{i=1}^n \mathcal U(-Y_i  ),
	$$
	where the negative sign reflects that our risks are losses and $\mathcal U$ is applied to wealth. 
	Therefore, $\mathbf Y$ does not dominate $\mathbf X$, showing that $ \XX$ is Pareto optimal. 
	\end{remark}

\subsection{Symmetry}\label{sec:54}

Symmetry is another important property reflecting the spirit both fairness and anonymity. 
Let $\Pi_n$ be the set of $n$-permutations,
and we write $\XX_\pi=(X_{\pi(1)},\dots,X_{\pi(n)})$ for $\pi\in \Pi_n$ and $\XX \in \X^n$.

\namedthmb{SM (Symmetry)}{For each $\XX\in \X^n$ and $\pi\in \Pi_n$, $(\mathbf A^\XX)_\pi =\mathbf A^{\XX_\pi}$.}

Property SM reflects that consideration that if agents $i$ and $j$ exchange their initial risk contributions, then they also exchange their allocations.  Hence, their identities or positions in the risk sharing pool does not matter; this clearly relates to both fairness and anonymity. 
Property SM is called the reshuffling property by \cite{DDR22}, and a similar property is called anonymity by \cite{LS20} in the setting of Bitcoin reward sharing. 

 Property SM is not directly assumed among our axioms, and CMRS satisfies Property SM by definition. Therefore, SM must follow from some of the axioms we impose.  Since SM is very intuitive for an anonymized risk sharing rule, we wonder which axioms yield SM. 
 It turns out that OA and ZP are sufficient for SM. 

\begin{proposition}\label{prop:SM}
Axiom OA and Property ZP imply Property SM. 
\end{proposition}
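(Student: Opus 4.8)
The plan is to use Proposition \ref{prop:OA}, which tells us that under Axiom OA the allocation $A_i^\XX$ depends only on the pair $(X_i, S^\XX)$. Write $A_i^\XX = f_i(X_i, S^\XX)$ for a suitable measurable functional $f_i$ attached to agent $i$; the total-sum constraint $\sum_i A_i^\XX = S^\XX$ will be used to pin down relations among the $f_i$. The goal is to show that all the $f_i$ coincide with a single functional $f$, since then for any permutation $\pi$ we get $(\mathbf A^\XX)_\pi = (A_{\pi(1)}^\XX, \dots, A_{\pi(n)}^\XX) = (f(X_{\pi(1)}, S^\XX), \dots, f(X_{\pi(n)}, S^\XX)) = \mathbf A^{\XX_\pi}$, because $S^{\XX_\pi} = S^\XX$; this is exactly Property SM.

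The key steps, in order: (1) invoke Proposition \ref{prop:OA} to obtain the representation $A_i^\XX = f_i(X_i, S^\XX)$; (2) apply this to a risk vector in which only agents $i$ and $j$ carry risk, say $\XX$ with $X_k = 0$ for $k \neq i,j$ and $X_i + X_j = S$ — by Property ZP all the other agents get $A_k^\XX = f_k(0, S) = 0$, so the sum constraint forces $A_i^\XX + A_j^\XX = S$, i.e. $f_i(X_i, S) + f_j(X_j, S) = S$; (3) now swap the roles: consider the vector with the same $S$ but with agent $i$ carrying $X_j$'s old value and agent $j$ carrying $X_i$'s old value — this gives $f_i(X_j, S) + f_j(X_i, S) = S$ as well. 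Subtracting, $f_i(X_i,S) - f_j(X_i,S) = f_i(X_j,S) - f_j(X_j,S)$, so the difference $f_i(\cdot,S) - f_j(\cdot,S)$ is a constant $c_{ij}(S)$ in its first argument; (4) evaluate at a constant first argument and use ZP (or just the $X_i = 0$ case, where both sides vanish) to conclude $c_{ij}(S) = 0$, hence $f_i(x,S) = f_j(x,S)$ for every $x$ in the relevant range and every realized value $S$. Since $n \ge 3$, for any pair $i,j$ we can indeed find a third agent to hold zero risk, so the argument is available for all pairs; (5) conclude $f_i = f_j =: f$ for all $i,j$, and read off SM as displayed above.

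The main obstacle I anticipate is step (4): making sure that the ``swap'' vectors in steps (2)–(3) are legitimate elements of $\X^n$ with the prescribed marginals, and being careful that $f_i(X_i, S^\XX)$ is genuinely a \emph{deterministic} function of the pair $(X_i, S^\XX)$ as random variables rather than of their realizations in some pathological way — this is precisely what Proposition \ref{prop:OA} guarantees, so I would lean on it explicitly. A secondary subtlety is that $f_i(x, S)$ for constant $x$ is constrained by ZP only at $x = 0$; but that single value suffices to kill the additive constant $c_{ij}(S)$, so no further normalization (such as CP) is needed — this is why the proposition only asks for ZP and not the stronger CP. One should also double-check that the identity $f_i(X_i,S) + f_j(X_j,S) = S$ can be extended from the two-agent-supported vectors to arbitrary vectors, but in fact we never need that: SM only compares $\mathbf A^\XX$ with $\mathbf A^{\XX_\pi}$, and once $f_i = f_j = f$ is established on the full domain, the conclusion is immediate from the representation $A_i^\XX = f(X_i, S^\XX)$ applied to both vectors.
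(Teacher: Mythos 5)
Your overall strategy — represent $A_i^\XX=f_i(X_i,S^\XX)$ via Proposition \ref{prop:OA} and show the $f_i$ do not depend on $i$ — is the right one and is the paper's. But there is a genuine gap in steps (3)--(4). From your two identities $f_i(X_i,S)+f_j(X_j,S)=S$ and $f_i(X_j,S)+f_j(X_i,S)=S$ with $X_i+X_j=S$, subtraction gives only
$$d(X)=d(S-X),\qquad d(\cdot):=f_i(\cdot,S)-f_j(\cdot,S),$$
for all admissible $X$; this is a reflection symmetry of $d$ about $S/2$, \emph{not} constancy of $d$. The first arguments you compare are not free: they are always a pair $(X,S-X)$ summing to $S$, so you never compare $d$ at two unrelated points. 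Consequently ``evaluate at a constant first argument'' in step (4) only yields $d(0)=d(S)=0$ via ZP, and says nothing about $d(X)$ for intermediate $X$. A function such as $d(X)=X(S-X)$ (schematically) satisfies everything you derived without vanishing, so the argument as written does not close.

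The repair is immediate from the ingredients you already have, and it is how the paper proceeds: instead of swapping the risks of $i$ and $j$, move agent $i$'s risk to a \emph{third} agent $k$ while keeping agent $j$'s risk fixed (this is where $n\ge3$ enters, not merely to supply a zero slot). Concretely, apply your step (2) to $(X,S-X,0,\dots,0)$ and to $(0,S-X,X,0,\dots,0)$: ZP kills all agents with zero risk, and the two sum constraints read $f_1(X,S)+f_2(S-X,S)=S$ and $f_3(X,S)+f_2(S-X,S)=S$. The common term $f_2(S-X,S)$ (guaranteed equal in both instances by Proposition \ref{prop:OA}, equivalently by OA applied to the merge $\mathbf Y=\XX+X_1\mathbf{e}_3-X_1\mathbf{e}_1$) cancels, giving $f_1(X,S)=f_3(X,S)$ outright, with no difference function to control. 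Repeating over pairs yields $f_i=f_j$ for all $i,j$, and your final paragraph then correctly delivers SM. Your remarks about measurability and about ZP sufficing (rather than CP) are fine and carry over to the corrected argument.
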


We can  briefly verify that SM does not come from any one of the four axioms alone. 
  The mean-adjusted all-in-one risk sharing rule in Example \ref{ex:indep} satisfies AF,  RA and OA, but not SM or ZP.
   The combination of $\mathbf{A}_{\rm all}$ and $\mathbf{A}_{\rm cm}$ in Example \ref{ex:indep}  satisfies AF, RF (hence ZP) and RA, but not SM or OA.

\section{Generalized risk sharing rules with target information}
\label{sec:generalized}

In some applications, more information than simply the realized value of the total risk is observable, and one may wish to allocate risks according to such information.  This leads to a generalization of  risk sharing rules.
Denote by $\Sigma$ the set of sub-$\sigma$-fields of $\mathcal F$. 
A \emph{generalized risk sharing rule} is a mapping $\widehat {\mathbf A}:(\X^n\times \Sigma) \to\X^n$ satisfying $\widehat{\mathbf{A}}^{\XX|\mathcal G}=(\widehat A_1^{\XX|\mathcal G},\dots,\widehat A_n^{\XX|\mathcal G})\in \mathbb A_n(S^\XX)$ for each $\mathbf X \in \X^n$
and $\mathcal G\in \Sigma$.

The input $\sigma$-field $\mathcal G$  represents the information used to determine the 
realized values of the allocation, called \emph{target information}. Note that $ \sum_{i=1}^n \widehat A_i ^{\XX|\mathcal G}  = S^\XX$ implies that the $\sigma$-field of $\widehat{\mathbf{A}}^{\XX|\mathcal G}$ must contain $\sigma(S^\XX)$ regardless of the choice of $\mathcal G$. 
To address this issue, we merge the information in $\sigma(S^\XX)$ into $\mathcal G$, and  denote by  $\mathcal G^\XX =\sigma(S^\XX,\mathcal{G})$  the $\sigma$-field generated by $S^\XX$ and $\mathcal G$.
Below, we present two properties describing how the information modelled by $\mathcal G $ and $\mathcal G ^\XX$ is used for the generalized risk sharing rule $\widehat {\mathbf A}$.

\namedthmb{IA (Information anonymity)}{For   $\XX \in \X^n$ and $\mathcal G\in \Sigma$, $\widehat{\mathbf A}^{\XX|\mathcal G}$ is $\mathcal G^\XX$-measurable.}

 Property IA  gives   $\mathcal G^\XX$-measurability instead of $ \mathcal{G}$-measurability. As discussed above, there does not exist  $\widehat{\mathbf A}$ such that $\widehat{\mathbf A}^{\XX|\mathcal G}$ is $ \mathcal{G}$-measurable for every $\mathcal G\in \Sigma$ and every $\XX\in \X^n$. 
Property IA  reflects on the idea that the risk allocations may not be determined by the realized value of $S^\XX$ but  it may also depend on other information represented by the set $\mathcal G$. 
Property IA is a generalization of Axiom RA.
For a given risk sharing rule $\mathbf A$, we can define $\widehat{\mathbf A}$ 
by $\widehat{\mathbf A}^{\XX|\mathcal G}= \mathbf A^\XX$ for each $\XX\in \X^n$ and $\mathcal G\in \Sigma$; that is,  the information $\mathcal G$ is ignored. 
In this case, $\mathbf A$ satisfies RA if and only if $\widehat{\mathbf A}$ satisfies IA.
For instance, the generalized risk sharing rule 
$\widehat{\mathbf A}$ defined 
by $\widehat{\mathbf A}^{\XX|\mathcal G}= \mathbf A_{\rm cm}^\XX$ for each $(\XX,\mathcal G)$
satisfies Property IA.


\namedthmb{IB (Information backtracking)}{For each $\XX \in \X^n$ and $\mathcal G\in \Sigma$, if $\XX$ is $\mathcal G^\XX$-measurable, then  $\widehat{\mathbf A}^{\XX|\mathcal G}=\mathbf X$.}

Property IB is a generalization of Property BT. 
It reflects on the consideration that getting an allocation determined by $\mathcal G^\XX$ is our target, and no risk exchange happens if the initial risk is already determined by $\mathcal G^\XX$.
This property excludes the example above given by $\widehat{\mathbf A}^{\XX|\mathcal G}= \mathbf A_{\rm cm}^\XX$ for each $(\XX,\mathcal G)$,
since $\widehat{\mathbf A}^{\XX|\sigma(\XX)}=\XX$ needs to hold by Property IB.

For a generalized risk sharing rule $\widehat{\mathbf A}$,
we say that it satisfies an axiom or property introduced for risk sharing rules,
if the mapping $\XX\mapsto \widehat{\mathbf A}^{\XX|\mathcal G}$ satisfies the corresponding axiom or property for each $\mathcal G \in \Sigma$.
The next result characterizes the \emph{generalized CMRS},  defined by
\begin{align}\label{eq:generalized-def}
	 \widehat{\mathbf A}^{\XX|\mathcal G}_{\rm cm}= \E\left [\XX | \mathcal{G}^\XX\right]\mbox{~~~for~$\XX \in \X^n$ and $\mathcal G\in \Sigma$,}
	\end{align}
	among all generalized risk sharing rules.
 
\begin{theorem}  \label{th:full3}  	  
	Assume $\X= L^1$ or $L_+^1$.  A generalized risk sharing rule  satisfies Axioms AF, RF  and OA and Properties IA and IB {if and only if}  it is the generalized CMRS.
\end{theorem}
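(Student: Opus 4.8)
The plan is to mirror the proof of Theorem~\ref{th:full}, replacing the fixed $\sigma$-field $\sigma(S^\XX)$ throughout by $\mathcal G^\XX=\sigma(S^\XX,\mathcal G)$, and to reduce the ``only if'' direction to a mild strengthening of Theorem~\ref{thm:BT}. The ``if'' direction is routine: fix $\mathcal G\in\Sigma$ and consider $\widehat{\mathbf A}^{\XX|\mathcal G}_{\rm cm}=\E[\XX\mid\mathcal G^\XX]$. Axiom AF is the tower property; Axiom RF follows from $\E[X_i\mid\mathcal G^\XX]\le\sup X_i$; Axiom OA holds because the merge $\mathbf Y=\XX+X_j\mathbf e_i-X_j\mathbf e_j$ leaves both $S^{\mathbf Y}=S^\XX$ (hence $\mathcal G^{\mathbf Y}=\mathcal G^\XX$) and $Y_k=X_k$ for $k\ne i,j$ unchanged; Property IA is immediate since the conditional expectation is $\mathcal G^\XX$-measurable; and Property IB is the identity $\E[\XX\mid\mathcal G^\XX]=\XX$ valid when $\XX$ is $\mathcal G^\XX$-measurable.

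For the ``only if'' direction, fix $\mathcal G\in\Sigma$. Then $\XX\mapsto\widehat{\mathbf A}^{\XX|\mathcal G}$ is an ordinary risk sharing rule satisfying Axioms AF, RF and OA, so Proposition~\ref{prop:OA} gives, with $S=S^\XX$, the reduction $\widehat A_1^{\XX|\mathcal G}=\widehat A_1^{(X_1,S-X_1,0,\dots,0)|\mathcal G}$, and likewise for each agent after relabeling (legitimate because Axiom OA and Property ZP, the latter implied by AF and RF, yield Property SM through Proposition~\ref{prop:SM}). It therefore suffices, for each fixed $S\in\X$, to analyze the operator
$$\phi(X)=\widehat A_1^{(X,S-X,0,\dots,0)|\mathcal G},\qquad X\in\X,$$
and to show $\phi(X)=\E[X\mid\mathcal H]$ for $\mathcal H:=\mathcal G^{(X,S-X,0,\dots,0)}=\sigma(S,\mathcal G)$; note $\phi$ takes values in $L^1(\Omega,\mathcal H,\p)$ by Property IA.

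As in the proof of Theorem~\ref{th:full}, one verifies that $\phi$ is normalized on constants (Property CP, implied by AF and RF), additive (Axiom OA, using the ``wiggle room'' of a third agent together with Property SM), monotone (Axiom RF and additivity), and expectation-preserving (Axiom AF). The new ingredient strengthens condition (d) of Theorem~\ref{thm:BT}: applying Property IB to the vector $(Z,S-Z,0,\dots,0)$, which is $\mathcal G^{(Z,S-Z,0,\dots,0)}=\mathcal H$-measurable for every $\mathcal H$-measurable $Z\in\X$ since $S$ is $\mathcal H$-measurable, gives $\phi(Z)=Z$ for \emph{every} $\mathcal H$-measurable $Z$, not just $\phi(S)=S$. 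With this, the argument of Theorem~\ref{thm:BT} runs with $\sigma(S)$ replaced by $\mathcal H$: once $\phi$ is a positive, additive, expectation-preserving map $L^1(\Omega,\mathcal F,\p)\to L^1(\Omega,\mathcal H,\p)$ acting as the identity on $L^1(\Omega,\mathcal H,\p)$, the classical characterization of conditional expectation forces $\phi=\E[\cdot\mid\mathcal H]$, so no truncation bootstrapping is needed here. Setting $S=S^\XX$ gives $\widehat A_1^{\XX|\mathcal G}=\E[X_1\mid\sigma(S^\XX,\mathcal G)]=\E[X_1\mid\mathcal G^\XX]$, and repeating for each agent yields $\widehat{\mathbf A}^{\XX|\mathcal G}=\E[\XX\mid\mathcal G^\XX]=\widehat{\mathbf A}^{\XX|\mathcal G}_{\rm cm}$ for every $\mathcal G\in\Sigma$.

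The main obstacle is the step ``the argument of Theorem~\ref{thm:BT} runs with $\mathcal H$ in place of $\sigma(S)$'': one must check that the proof of that theorem uses $\mathcal G=\sigma(S)$ only through the property that $\phi$ fixes $L^1(\Omega,\mathcal H,\p)$ -- now supplied directly by Property IB for the general $\mathcal H$ -- or else re-derive the $\mathcal H$-conditional-expectation characterization from scratch. A secondary technical point, handled as in the proof of Theorem~\ref{th:full}, is the verification of additivity, monotonicity, and the identity property of $\phi$ on $\mathcal H$-measurable risks when $\X=L_+^1$, where negative risks are unavailable and one argues instead from the implication CP$\,+\,$OA$\,\Rightarrow\,$RF and from suitable nonnegative perturbations. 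Finally, Property IB cannot be dropped: for fixed $\mathcal G$ the rule $\widehat{\mathbf A}^{\XX|\mathcal G}=\E[\XX\mid\sigma(S^\XX)]$ satisfies AF, RF, OA and IA but not IB; specializing $\mathcal G=\{\emptyset,\Omega\}$ in the theorem just recovers the statement AF, RF, OA, RA, IB $\Rightarrow$ CMRS, which is a consequence of Theorem~\ref{th:full}.
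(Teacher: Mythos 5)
Your proposal is correct and follows essentially the same route as the paper: reduce via Axiom OA to the one-agent operator $\phi(X)=\widehat A_1^{(X,S-X,0,\dots,0)|\mathcal G}$, verify constant preservation, additivity, monotonicity and expectation preservation as in Theorem~\ref{th:full}, and use Property IB (applied to $(Z,S-Z,0,\dots,0)$ for $\mathcal H$-measurable $Z$) to get that $\phi$ fixes $L^1(\Omega,\mathcal H,\p)$, which is exactly the property \eqref{eq:SBT} from which the proof of Theorem~\ref{thm:BT} derives the representation $\phi=\E[\cdot\mid\mathcal H]$. Your observation that the truncation bootstrapping step of Theorem~\ref{thm:BT} is not needed here, since IB hands you the identity on all of $L^1(\Omega,\mathcal H,\p)$ directly, is precisely what the paper means when it says property (d) "corresponds to \eqref{eq:SBT}" and then invokes "the same argument there."
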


  Theorem \ref{th:full3} indicates that, assuming   AF, RF, OA, IA and IB,  the only generalized risk sharing rule with a given target information $\mathcal G^\XX$ needs to be calculated based on the conditional expectation with respect to $\mathcal G^\XX$. This interpretation is similar  to the result of Theorem \ref{th:full}.  
 The generalized CMRS characterized in Theorem \ref{th:full3} will be useful as many practical applications involve allocations that are not solely determined by the total risk.  
 We discuss some of them in Section \ref{sec:app}.

\section{Applications}
\label{sec:app}

In this section, we discuss  a few examples of risk or reward sharing,
as  illustrative examples of  the features of our axioms for anonymized risk sharing. 
We begin with the most simple application of a Bitcoin mining pool which is best described by CMRS, 
and proceed to three other contexts where the generalized CMRS in Section \ref{sec:generalized} appears to be suitable.

\subsection{A single Bitcoin mining pool}
\label{sec:71}
By the design of the Bitcoin protocol (\cite{N08}), when a computational puzzle is solved by a decentralized network of anonymous computers, which are commonly called miners,
a block in the Bitcoin blockchain is issued by a randomly selected miner, to which a block of bitcoins is rewarded.\footnote{To be more accurate, each computation of a hash is equally likely to lead to a value that allows the miner to write the next block and receive the reward.}
Since the bitcoin price has increased drastically over the past few years (with a peak at more than 60,000 USD per coin in 2021), mining activities become very risky, with a large monetary value of the reward and a small probability of success, for individual miners.
For this reason, mining pools are formed by groups of miners to share the risk. 
 Risk-averse miners always have incentives to join mining pools to improve their utility; for this statement and more background on Bitcoin, including criticisms on its environmental and economic impact and the conflict  between    mining pools and decentralization, we refer to  \cite{CK17}, \cite{ES18} and \cite{LS20}.  It is not our intention to say whether mining pools are good or bad;  taking their existence as given, our focus is the design of   reward sharing mechanisms within mining pools. 
 
Suppose that $n$ miners form a mining pool to share the possible reward from mining the next block.
Let the random variable $P> 0$ represent the monetary value of the next block at the time of solving the block. 
The miners' initial contribution vector is  $\XX= (\id_{D_1},\dots,\id_{D_n})$, where $D_i \subseteq \Omega$ is the event that miner $i$ successfully issues the next block,\footnote{All sets $D\subseteq \Omega$ that appear in this section are assumed  measurable, i.e., $D\in \mathcal F$.} and the probability $ \p(D_i)$ represents the computational contribution of the miner $i$,    measured by the number of hashes tried, divided by that of all miners in the world mining the block. 
Since the reward has a non-negative monetary value, we interpret positive values of allocations as rewards  in this section, a different sign convention from the rest of the paper; this causes no technical problem as all our results are invariant with respect to their signs. 
 The events $D_1,\dots,D_n$ are mutually exclusive because at most one miner can issue the next block. We assume that 
these events are independent of the bitcoin price $P$ because $P$  is determined by  market activities and $D_1,\dots,D_n$ are determined by randomly trying solutions.
Denote by $D=\bigcup_{i=1}^n D_i$ the event that any miner from this pool issues the block. 

Our setting is similar to the one used by \cite{LS20} which focuses on  the reward mechanism of home miners, i.e., those outside mining pools.
In the setting of the latter paper, rewards to individual miners are characterized by the probability of receiving $1$ block,  whereas in our framework of a mining pool, miners can receive any amount between $0$ and $P$. This distinction is useful in our later analysis of application of mining multiple cryptocurrencies in Section \ref{sec:73}. 
Our reward mechanism within a mining pool  complements  the one for home miners  studied by \cite{LS20}.

To make our analysis of   reward sharing    rigorous, let $P\in \X $ be a fixed positive random variable, and denote by
 $\mathcal B_n$   the set of all possible contribution vectors from the miners, that is,
$$\mathcal B_n=\{(\id_{D_1},\dots,\id_{D_n}): D_1,\dots,D_n\subseteq \Omega \mbox{~are disjoint and independent of $P$}\}.$$
We assume that the probability space is rich enough so that a continuously distributed random variable independent of $P$ exists. 
A \emph{reward sharing rule} is  a mapping
$
\mathbf A: \mathcal B_n \to \X^n 
$ satisfying $\mathbf A^\XX=\mathbb A_n(S^\XX)$ for each $\XX\in \mathcal B_n$
and   $A_i^\XX =A_j^\XX$ for $i,j\in [n]$ with $\p( D_i)=\p(D_j)$ where $\XX=P(\id_{D_1},\dots,\id_{D_n})$.  
The last requirement reflects that only the amount of computational contribution of each miner is supplied (instead of the specification of the events $D_i$ and $D_j$), as in the settings of \cite{ES18} and  \cite{LS20}.  
The four axioms of fairness and anonymity have natural interpretations and desirability in this setting.  
\begin{enumerate}[(i)]
\item Axiom AF means that no agent gets less (or gets more) than their initial contribution in expectation, a simple form of fairness among anonymous participants. 
\item  Axiom RF (with a sign flip, i.e., $A^\XX_i\ge \inf X_i$ for $i \in [n]$) means that any miner has a non-negative reward, since $\inf (P\id_{D_i})=0$ if $\p(D_i)\in [0,1)$.
Note that the case $\p(D_i)=1$ is trivial since all other agents have $0$ contribution and $0$ reward (using AF), and agent $i$ receives the whole reward $P$.
\item  Axiom RA means that the reward does not depend on which miner issues the block.
If the block is issued by the pool, the rewards to miners depend on their computational contributions and the bitcoin price, but not the actual issuing miner. This feature is central to the idea of creating a mining pool and joining computational resources.
\item  Axiom OA means that the mechanism is safe against merging and Sybil attacks, i.e., creating multiple accounts of the same participant.
Recall that miners are represented by computers and online accounts, and merging, splitting, or creating them is not disclosed to other miners. Hence, such operations between some miners should not affect the reward to an uninvolved miner.
\end{enumerate} 

Anticipated from Theorem \ref{th:full}, 
the only reward sharing rule satisfying Axioms RA, RF, AF and OA should be CMRS. 
This is indeed true,   although a separate proof is needed,
as the set $\mathcal B_n$ is much smaller than $\X^n$, 
preventing us from directly applying Theorem \ref{th:full}.
 
\begin{proposition}\label{prop:bitcoin}
Assume $P\in \X=L^1$ and $P>0$.
	A reward sharing rule $
\mathbf A: \mathcal B_n \to \X^n 
$ satisfies Axioms RA, RF, AF and OA if and only if it is specified by 
\begin{align}
\label{eq:bitcoin}
A^\XX_i = \frac{\p(D_i)}{ \p(D )}P \id_D,~~~ i\in [n],~  \XX= P(\id_{D_1},\dots,\id_{D_n})\in \mathcal B_n,
\end{align}
which is   CMRS. 
\end{proposition}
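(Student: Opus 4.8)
The plan is to verify the formula \eqref{eq:bitcoin} directly, and then to show it is forced by the four axioms on the restricted domain $\mathcal{B}_n$. The "if" direction is a short computation: writing $\XX = P(\id_{D_1},\dots,\id_{D_n})$ with the $D_i$ disjoint and independent of $P$, one has $S^\XX = P\id_D$ where $D = \bigcup_i D_i$, and the candidate rule in \eqref{eq:bitcoin} is clearly an allocation of $S^\XX$ (the weights $\p(D_i)/\p(D)$ sum to $1$ on $D$), is $\sigma(S^\XX)$-measurable since it is a deterministic function of $P\id_D$, treats miners with equal $\p(D_i)$ equally, and satisfies $A_i^\XX \ge 0 = \inf X_i$; for AF one computes $\E[A_i^\XX] = (\p(D_i)/\p(D))\E[P\id_D] = (\p(D_i)/\p(D))\p(D)\E[P] = \p(D_i)\E[P] = \E[X_i]$ using independence of $D$ and $P$. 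One then checks this coincides with $\E[\XX \mid S^\XX]$: conditionally on $S^\XX = P\id_D = s$ with $s > 0$, the event $D$ has occurred, and $\E[\id_{D_i} \mid D, P] = \p(D_i)/\p(D)$ by independence, so $\E[X_i \mid S^\XX] = (\p(D_i)/\p(D)) P\id_D$; on $\{S^\XX = 0\} = D^c$ (up to the null set $\{P=0\}$) all allocations vanish, matching \eqref{eq:bitcoin}.

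For the "only if" direction, I would mimic the discrete-case sketch of Theorem \ref{th:full} but carry it out within $\mathcal{B}_n$. First, note that OA together with AF and RF gives, exactly as in \eqref{eq:constancy} and \eqref{eq:OA-trans}, that $A_1^\XX$ is determined by $(X_1, S^\XX) = (P\id_{D_1}, P\id_D)$ and that a miner with zero contribution gets zero. Fix the "shape" of the problem by reducing via OA to the three-miner situation $(\id_{D_1}, \id_{D\setminus D_1}, \id_{D^c})$ scaled by $P$; here the richness assumption (a continuously distributed variable independent of $P$ exists) is what lets us realize disjoint events of prescribed probabilities independent of $P$ and apply merging. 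The key structural point is that on the event $D$, RA forces $A_1^\XX$ to be a function of $P\id_D$ alone, hence of $P$; by the equal-treatment requirement and additivity-type arguments from merging copies of identically-distributed sub-events, $A_1^\XX$ must be linear in $\p(D_1)$ with the proportionality factor a function of $P$, i.e., $A_1^\XX = \p(D_1)\, g(P)\,\id_D$ for some measurable $g$ (the $\id_D$ because a zero-contribution-to-$D$ miner, and indeed all of them when $D$ fails, get $0$). Summing over $i$ and using $\sum_i A_i^\XX = S^\XX = P\id_D$ forces $\p(D) g(P) = P$ on $D$, so $g(P) = P/\p(D)$ and we recover \eqref{eq:bitcoin}; AF then need only be used once, to pin down the overall scaling, but in fact the summation constraint already does it.

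The main obstacle is the reduced domain: unlike in Theorem \ref{th:full}, one cannot feed in arbitrary random variables as inputs, so the additivity property (b) of the auxiliary functional $h^{S,s}$ cannot be obtained by the same merging trick applied to general $X$ — one only has indicator-valued contributions. The remedy is to extract the needed linearity from splitting a single event $D_1$ into disjoint equiprobable pieces $D_1 = D_1^{(1)} \cup \dots \cup D_1^{(m)}$ (possible by richness) and merging them back, which gives $A_1$ evaluated at probability $m\alpha$ equals $m$ times its value at probability $\alpha$ along a dyadic/rational net, and then monotonicity from RF (via the constancy/merging identity) upgrades this to full linearity in $\p(D_1)$. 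A second, more minor subtlety is handling the degenerate cases $\p(D) \in \{0,1\}$ and the null event $\{P = 0\}$, where the $0/0 = 0$ convention and the triviality noted in item (ii) of the axiom discussion dispose of them. Once linearity in $\p(D_1)$ with an arbitrary price function $g(P)$ is established, the adding-up constraint finishes the proof, and one observes a posteriori that this rule is exactly the conditional mean, as computed in the "if" direction.
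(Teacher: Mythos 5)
Your proposal is correct and follows essentially the same route as the paper's proof: both verify the formula directly for the ``if'' direction, and for the ``only if'' direction both reduce via OA to a vector with two nonzero miners, obtain additivity of the induced map on indicator-valued contributions by splitting $D$ into equiprobable pieces and merging them back, pin down the value at rational multiples of $\p(D)$ using the whole-pie identity $A_1^{(P\id_D,0,\dots,0)}=P\id_D$ (from AF and RF), and extend to arbitrary $C\subseteq D$ via the monotonicity supplied by RF. One small slip: the OA reduction should land on $(P\id_{D_1},P\id_{D\setminus D_1},0,\dots,0)$ rather than $P(\id_{D_1},\id_{D\setminus D_1},\id_{D^c})$, since OA preserves the total $S^\XX=P\id_D$; this does not affect the substance of your argument.
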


The allocation \eqref{eq:bitcoin} is precisely the common practice in mining pools; 
see \cite{ES18} and \cite{LS20} in case $P=1$.  
The total value $P$  is shared proportionally to the computational contribution of each miner  if this mining pool successfully issues the block (i.e., $\id_D=1$),
and the rewards are $0$ otherwise.  
This reward sharing rule is CMRS, since 
 $\E[P \id_{D_i} |P \id_D] = P \id_D \E[\id_{D_i} |  \id_D] =  P \id_D \p(D_i)/\p(D).$\footnote{In this simple setting, CMRS also coincides with the mean proportional risk sharing rule.}
An example is shown in Figure \ref{fig:bitcoin}  to illustrate how \eqref{eq:bitcoin} works for  three miners. Before joining a mining pool,   miner $1$ gets the reward $P$   if  she issues the block (purple area in Figure \ref{fig:bitcoin}) and otherwise she receives nothing. After  miner $1$ joins the   group, the reward for her will be $P\p(D_1)/\p(D)$ if any of the three miners issues the block (brown area in Figure \ref{fig:bitcoin}). The new insight offered by Proposition \ref{prop:bitcoin} is that the reward sharing rule \eqref{eq:bitcoin} is the unique possible mechanism if our four axioms are considered as desirable,
and thus they fully rationalize the choice of this mechanism in practice. 

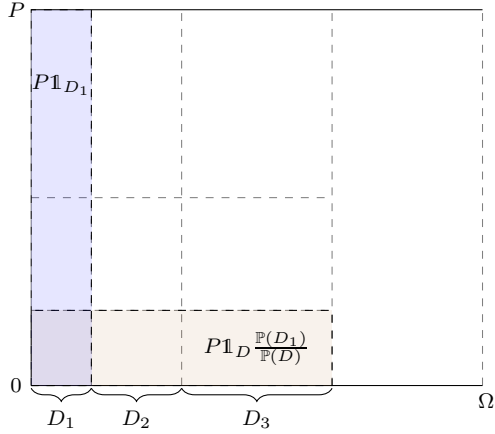
\begin{figure}[t]
	\begin{center}
		\begin{tikzpicture} 
			\draw[gray,dashed] (0,0) -- (0,5);
			\draw[gray,dashed] (0.8,0) -- (0.8,5);
			\draw[gray,dashed] (2,0) -- (2,5); 
			\draw[gray,dashed] (4,0) -- (4,5); 
			\draw[gray,dashed] (0,1) -- (4,1);
			\draw[gray,dashed] (0,2.5) -- (4,2.5);  
			\draw[gray,dashed] (6,0) -- (6,5);  
			\draw [black] (0,0) -- (6,0);
			\draw [black] (0,5) -- (6,5);
			\node at (0.4,-0.45)   {\scriptsize $D_1$}  ;     
			\node at (1.4,-0.45)   {\scriptsize $D_2$}  ;   
			\node at (3,-0.45)   {\scriptsize$D_3$}  ;          
			\node at (-0.2,5) {\scriptsize $P$};  
			\node  at (-0.2,0) { \scriptsize$0$};       
			\node at (6.05,-0.2) { \scriptsize $\Omega$};  
			\draw [decorate,  decoration = {brace,    raise=1pt,    amplitude=5pt}] (0.8,0) --  (0,0);
			\draw [decorate,  decoration = {brace,    raise=1pt,    amplitude=5pt}] (2,0) --  (0.8,0);
			\draw [decorate,  decoration = {brace,    raise=1pt,    amplitude=5pt}] (4,0) --  (2,0); 
			\draw[   fill=brown, dashed,
			fill opacity = 0.1] (0,0) --  (0,1) -- (4,1) --  (4,0) -- (0,0);
			\draw[   fill=blue, dashed,
			fill opacity = 0.1] (0,0) --  (0.8,0) -- (0.8,5) --  (0,5) -- (0,0); 
			\node  at (3,0.5) { \scriptsize $P\id_{D}\frac{\p(D_1)}{\p(D)}$}; 
			\node  at (0.4,4) { \scriptsize $P\id_{D_1}$};     
		\end{tikzpicture} 
	\end{center}
	\caption{An illustration of a Bitcoin mining pool of 3 miners}
	\label{fig:bitcoin}
\end{figure}

\subsection{Multiple mining pools} 

Next, suppose that   miners can choose to participate in multiple pools by allocating their computational resources among these pools.
We would like to pin down a suitable allocation rule in this setting with the help of the generalized CMRS and a specific choice $\mathcal G$ of  target information.
 
There are $m$ mining pools.
Let $E_1,\dots,E_m$ be mutually exclusive events where $E_j$ represents the event that pool $j$ successfully issues the block.  Miners can choose to join one or several mining pools, and their initial risk vector is given by $\XX=P(\id_{D_1},\dots,\id_{D_n})$ as in Section \ref{sec:71}.
Since the $n$ miners form the $m$ mining pools, we have the equality
  $\bigcup_{i=1}^n D_i =  \bigcup_{j=1}^m E_j$ and the decomposition $D_{i}= \sum_{j=1}^m \id_{D_{i}\cap E_j}$ where $D_{i}\cap E_j$ is the contribution of miner $i$ to pool $j$.  
  
Due to the separation of mining pools, we consider a generalized reward sharing rule 
with target information $\mathcal G=\sigma (P,\id_{E_1},\dots,\id_{E_m})$, that is, the information of the Bitcoin price $P$ and the winning pool which successfully mines the block.
For this choice of $\mathcal G$, the generalized CMRS $  \widehat {\mathbf A}$ in \eqref{eq:generalized-def} is given by
\begin{align}
\label{eq:multiplepool}
  \widehat A_i^{\XX|\mathcal G} = \sum_{j=1}^m \frac{\p(D_{i}\cap E_j)}{\p(E_j)} P \id_{E_j}.
\end{align}
Note that this rule can be easily implemented in practice as $ \p(D_{i}\cap E_j)/ \p(E_j)$ is the relative share of computational contribution of worker $i$ to  pool $j$.
This rule can be equivalently explained by the mechanism in which all $m$ pools are allocated separately and each of them uses CMRS.

Similarly to Proposition \ref{prop:bitcoin}, we can show that \eqref{eq:multiplepool} is the only generalized reward sharing rule with target information $\mathcal G$ satisfying  conditions similar to those in Section \ref{sec:71}, and this rationalizes the practice of allocating rewards across multiple mining pools.

\subsection{Multiple cryptocurrencies} 
\label{sec:73}

We proceed to consider a pool of $n$ miners  with  a collection of $m$  cryptocurrencies (which we call coins) with random prices $P_1,\dots,P_m$ in a pre-specified period of time.
The computational contribution of miner $i$ to coin $j$ is fixed during this period of time.
For simplicity, we assume that for each of these coins at most  one block may be issued during this period of time. 
Denote by $D_{ij}$   the event that miner $i$   issues the block for coin $j$,
and by  $D^j=\bigcup_{i=1}^n D_{ij}$  is the event that coin $j$ is successfully mined by the pool. 
The  events $D_{ij}$ are mutually exclusive across $i\in [n]$ for the same $j $. We further assume that  $D_{ij}$ is independent of $\{ D_{k\ell}: k\in [n],~\ell \in [m]\setminus\{ j\}\}$,  because issuing the block of one coin should not affect issuing the block of another one. 
Similarly to Section \ref{sec:71}, we assume that the prices $P_1,\dots,P_m$ are independent to the issuance events. 
  The initial risk vector is given by $\XX=\sum_{j=1}^m P_j(\id_{D_{1j}},\dots,\id_{D_{nj}})$.
  Finally, we consider the target information 
 $\mathcal G=\sigma (P_1,\dots,P_m,\id_{D^1},\dots,\id_{D^m})$,
 which is the information of the coin prices and the events of whether each of them is successfully mined. 
 
 For this choice of $\mathcal G$, the generalized CMRS $  \widehat {\mathbf A}$ in \eqref{eq:generalized-def} is given by
 $$ 
  \widehat A_i^{\XX|\mathcal G} =\sum_{j=1}^m \frac{\p(D_{ij})}{\p(D^j)} P_j \id_{D^j} ,
 $$
because $\E[ P_j\id_{D_{ij}}|\mathcal G] =  \E[ P_j\id_{D_{ij}} | P_j \id_{D^j} ] =P_j  \p(D_{ij})/\p(D^j)  .$
In other words, each miner gets a proportion $\p(D_{ij}) /\p(D^j)$ of each successfully mined coin, where the proportion is determined by its relative contribution to the pool for that particular coin.

\subsection{Revenue sharing}

Our final application concerns revenue sharing in subscription‑based online platforms, and our primary examples are   music platforms such as Spotify, Deezer, or Apple Music; see \cite{MKAC23} for a description of revenue sharing in subscription‑based music platforms. 
Suppose that there are $n$ artists  and $m$ potential users in a specific month (many platforms collect subscription fees monthly). In this context, $m$ is usually much larger than $n$.

We assume that each user can subscribe to the platform because of one artist $i$, which is unobservable from the platform or the artist. 
Let    $D_{ij}$ be the event that user $j$ subscribes because of artist $i$, 
and $D_{ij}$ are mutually singular across $i\in [n]$. 
Assume that   the subscription events across different users are independent; i.e., $D_{ij}$ is independent of $\{ D_{k\ell}: k\in [n],~\ell \in [m]\setminus\{ j\}\}$ for each $i\in [n]$ and $j\in [m]$. 
Let
$D^j=\bigcup_{i=1}^n D_{ij}$ be the event that user $j$ subscribes to the platform, which is observable, and it generates a non-random revenue $q_j>0$ (i.e., subscription fee, which may vary across users). 
If $D^j$ does not occur, then user $j$ does not subscribe to the platform during the considered month. 
Suppose that for $j\in [m]$, a proportion $\delta_j$ of $q_j$ will be shared by the artists (the other proportion is kept by the platform or used to cover costs), and we denote by $p_j=\theta_j q_j$.

In this model, the initial risk vector is given by $\XX=\sum_{j=1}^m p_j(\id_{D_{1j}},\dots,\id_{D_{nj}})$, which is not observable to the platform. 
The target information is modelled by $\mathcal G=\sigma (\id_{D^1},\dots,\id_{D^m})$, that is, the information based on the events of subscription.  
 For this choice of $\mathcal G$, the generalized CMRS $  \widehat {\mathbf A}$ in \eqref{eq:generalized-def}  is given by
\begin{align}
\label{eq:revenue}
   \widehat {A}_i^{\XX|\mathcal G} = \sum_{j=1}^m  \frac{\p(D_{ij})}{\p(D^j)}   p_j \id_{D^j}, 
\end{align} 
similarly to the model in Section \ref{sec:73}.
Although $\p(D_{ij})$ and $\p(D^j)$ are not directly observable, their ratio $\p(D_{ij}) / \p(D^j)$ can be estimated the ratio $s_{ij}$ of the number of streams of user $j$ using (e.g., listening to) the work of artist $i$ to that of all streams of user $j$, and such data are available to the platform.
Intuitively, the more user $j$ uses  the work of  artist $i$,
the more likely that user $j$ subscribed because of artist $i$. 
With the ratio $\p(D_{ij}) / \p(D^j)$ estimated by $s_{ij}$, the revenue sharing mechanism \eqref{eq:revenue} is the \emph{user-centric}  remuneration model promoted by some platforms based on an argument of fairness.\footnote{For instance, the platform \texttt{Deezer} is promoting the {user-centric} payment system; see \url{https://www.deezer-blog.com/how-much-does-deezer-pay-artists/} (accessed April 2023).}
We refer to \cite{MKAC23} for a comparison of this  revenue sharing rule and others,
and our framework provides a theoretical reasoning  for the user-centric system.

\section{Concluding remarks}
\label{sec:conclusion}

Four axioms of fairness and anonymity, Axioms AF, RF, RA and OA, are proposed in the paper. 
As the main result of the paper (Theorem \ref{th:full}), the four axioms  uniquely identify CMRS,
making CMRS a unique desirable rule to use in many applications of anonymized risk sharing. 
Among the four axioms, AF and RF
reflect fairness in the most natural sense.
OA reflects irrelevance of certain unseen operations, and it is desirable if agents are simply online accounts without disclosing their identity. 
RA, requiring the realized risk allocation to be determined by the sum, is the least straightforward requirement among the four, although it is quite commonly seen in many applications.  
The application of reward sharing in a  Bitcoin mining pool illustrates the desirability of all four axioms.
We do not see any general reasons to dispute any of the four axioms in the framework of anonymized risk sharing, although in certain specific applications CMRS may be suboptimal for some agents. 

Decentralization in  finance and insurance is getting increasing attention from both   academia and the financial industry. As one of the most important features of decentralization, anonymity  guarantees that agents are not distinguished by their preference, identity, private operations, and realized losses. 
As such, our paper serves  as a theoretical support to the wide applications of  CMRS as a standard tool   in many relevant applications in decentralized risk sharing.

As a potential limitation, CMRS  requires a full specification of the joint distribution of the risk contributions from the agents to compute. This is not a problem for  the applications discussed in Section \ref{sec:app} due to their specific settings of available information. For some other applications, computational issues can be cumbersome for a large set of heterogeneous agents; for computing CMRS in some specific models, see \cite{D19} and \cite{DDR22} and the references therein.

\subsubsection*{Acknowledgements}  
We thank Thomas Bernhardt, 
An Chen,
Michel Denuit, Runhuan Feng,
Jan Dhaene,    
 Mario Ghossoub,
Felix-Benedikt Liebrich,
Fabio Maccheroni, 
and Giulio Principi  
for helpful  comments  and
 discussions. 
Ruodu Wang is supported by the Natural Sciences and Engineering Research Council of Canada (RGPIN-2018-03823, RGPAS-2018-522590).

\small

\normalsize
\appendix

\section{Proofs in Section \ref{sec:axiom}}

\label{app:pf4}

We first prove Theorem \ref{thm:BT}, as it will be used in the proof of Theorem \ref{th:full}.

\begin{proof}[Proof of Theorem \ref{thm:BT}]
	First, we prove that $\phi$ is continuous. Suppose that $X_n\to X$ in $L^1(\Omega,\mathcal F, \p)$.   By using (b) and (c), we have 
	$$
	\phi(X_n)-\phi(X) =\phi(X_n-X)\le \phi(|X_n-X|).
	$$
	Similarly,
	$$
	\phi(X_n)-\phi(X) =\phi(X_n-X) \ge \phi(- |X_n-X|) = -\phi(  |X_n-X|).
	$$
	Hence, 
	$$
	|\phi(X_n)-\phi(X) |\le | \phi(|X_n-X|)|=\phi(|X_n-X|),
	$$
	where the last equality is due to $\phi(|X_n-X|) \ge \phi(0)=0$ by (a). 
	Using (e), we have $\E[\phi(|X_n-X|)]=\E[|X_n-X|] \to 0$.
	Therefore, $\E[|\phi(X_n)-\phi(X) |]\le \E[\phi(|X_n-X|)]\to 0$. This means $\phi(X_n)\to \phi(X)$ in $L^1(\Omega,\mathcal G, \p)$, thus showing the continuity of $\phi:L^1(\Omega,\mathcal F, \p) \to L^1(\Omega,\mathcal G, \p)$.
	
	Next, we prove that $\phi$ is linear. Based on the fact that $\phi$ is (b) additive and (c) monotone, we have 
	\begin{equation*}\label{eq:positive2}
		\phi(X) \ge \phi(0) = 0 ~~~~ \text{ for any }  X \in L^1(\Omega, \mathcal F, \p) \text{ and } X \ge 0,
	\end{equation*}
	which implies that $\phi$ is linear (see, e.g., Theorem 1.10 of \cite{AB06}):
	\begin{equation*}
		\phi(\alpha X + \beta Y) = \alpha \phi(X) + \beta \phi(Y) ~~~~ \text{ for any } \alpha, \beta \in \R \text{ and } X, Y \in L^1(\Omega, \mathcal F, \p).
	\end{equation*}
	It further follows that $\phi$ is a positive operator on $\X$.  Recall that a linear operator between two ordered vector spaces is  a positive operator  if it maps positive elements to positive elements. 
	
	Finally, we show that $\phi$ satisfies  the following      property 
 	\begin{equation}\label{eq:SBT}
 		\text{$\phi(X)=X$ for all $\sigma(S)$-measurable $X$.}
 	\end{equation} For $t\in \R$,  by (c) we have 
	$
	\phi(S\vee t) \ge \phi(S)\vee \phi(t)$,
	and by (a) and (d) we get
	$ \phi(S)\vee \phi(t)= S\vee t.
	$ 
	Since $\phi(S\vee t) \ge S\vee t$ and they have the same mean by (e),   we know 
	\begin{align} \label{eq:upper}
		\phi(S\vee t)=S\vee t ~~~~\mbox{for all $ t\in \R$.}
	\end{align}
	Write $T_{s,t} =  \frac{1}{t-s}  (S\vee t - S \vee s ) $ for $t>s$. 
	Note that for all $t>s$, 
	\begin{align} \label{eq:sandwich} \id_{\{S\le s\}} \le  T_{s,t}   \le \id_{\{S\le t\}}.\end{align}
	The linearity of $\phi$ and  \eqref{eq:upper} imply that 
	$\phi\left( T_{s,t} \right) =  T_{s,t}$.
	Using the above equality, (c) and 
	\eqref{eq:sandwich}, 
	we have, for all $t>s$,
	$$
	\phi\left(  \id_{\{S\le s\}} \right) \le
	\phi\left( T_{s,t} \right) =  T_{s,t} \le \id_{\{S\le t\}},
	$$ 
	and 
	$$
	\phi\left(  \id_{\{S\le t\}} \right) \ge
	\phi\left( T_{s,t} \right) =  T_{s,t} \ge \id_{\{S\le s\}}.
	$$ 
	It follows that for all $\epsilon>0$, 
	$$
	\id_{\{S\le t-\epsilon \}} \le \phi\left(  \id_{\{S\le t\}} \right) \le \id_{\{S\le t+\epsilon \}} .
	$$
	Hence, $\id_{\{S< t\}} \le \phi\left(  \id_{\{S\le t\}} \right) \le \id_{\{S\le t\}}   $
	for all $t\in \R$.
	Using (e), we know  
	$ \phi\left(  \id_{\{S\le t\}} \right)  = \id_{\{S\le t\}}$.   
	From this equality, using (a) and linearity of $\phi$, it follows that $\phi\left(  \id_{\{S> t\}} \right) = \id_{\{S> t\}} $ for all $t\in \R$. 
	
	Define the class
	$$
	\mathcal{C} = \left\{C \in \mathcal{F}: \phi(\id_{C}) = \id_{C}  \right\}.
	$$
	Hence, $\{S \leq t\} \in \mathcal{C}$ for any $t \in \R$. We have 
	$\Omega \in \mathcal{C}$. Using linearity of $\phi$, we have that if $C \in \mathcal{C}$, then 
	$$
	\phi(\id_{C^c}) = \phi(1 - \id_{C}) = \phi(1) - \phi(\id_{C}) = 1 - \id_{C} = \id_{C^c},
	$$
	which implies that the complement set $C^c \in \mathcal{C}$. 
	Suppose that $\{C_i\}_{i \geq 1} \subseteq \mathcal{C}$ are disjoint. We proceed to show that $\bigcup_{i = 1}^ \infty C_i \in \mathcal{C}$. Indeed, using monotonicity and additivity of $\phi$, we have
	$$
	\phi\left(\id_{\{\bigcup_{i = 1}^{\infty} C_i\}}\right) \geq \phi\left(\id_{\{\bigcup_{i = 1}^{m} C_i\}}\right) = \phi\left(\sum_{i = 1}^{m} \id_{C_i}\right) = \sum_{i = 1}^{m} \id_{C_i}, ~~ \text{for all } m \geq 1.
	$$
	Letting $m \rightarrow \infty$, we have
	$$
	\phi\left(\id_{\{\bigcup_{i = 1}^{\infty} C_i\}}\right) \geq \sum_{i = 1}^{\infty} \id_{C_i} = \id_{\{\bigcup_{i = 1}^{\infty} C_i\}}.
	$$
	Based on (e), we have
	$$
	\E\left[\phi\left(\id_{\{\bigcup_{i = 1}^{\infty} C_i\}}\right)\right] = \E\left[\id_{\{\bigcup_{i = 1}^{\infty} C_i\}}\right],
	$$
	which implies   $\phi\left(\id_{\{\bigcup_{i = 1}^{\infty} C_i\}}\right) = \id_{\{\bigcup_{i = 1}^{\infty} C_i\}}$ and $\bigcup_{i = 1}^{\infty} C_i \in \mathcal{C}$. 
	Hence, the class $\mathcal{C}$ is a $\sigma$-field and $\sigma(S) \subseteq \mathcal{C}$ based on the monotone class theorem. 
	It follows that  
	$\phi(\id_{B}) = \id_B$ for all $B\in \mathcal G$. 
	Since any $\mathcal G$-measurable $X$ can be upper and lower approximated by the summation of simple functions, using linearity and monotonicity we conclude that $\phi(X)=X$ for all $\mathcal G$-measurable $X$.
	
The conditions  that $\phi$ is continuous, linear and monotone and satisfies  \eqref{eq:SBT} guarantee   the representation of $\phi$    (see Proposition 2.6 of \cite{FKV12} or Theorem 1 of \cite{P67}), as 
	\begin{equation}\label{eq:represent}
		\phi (X) = \E[Z X | S] \mbox{~~~~~for all $X\in L^1(\Omega,\mathcal F,\p)$,}
	\end{equation}
	for some $Z \ge 0$ satisfying  $\E[Z | S] = 1$. 
	Using (e), we get
	 $
	1= \E[Z] =\E[\phi(Z)] = \E[\E[Z^2 |S]] = \E[Z^2].
	 $ 
 Since $\E[Z^2]=\E[Z]=1$, we know $Z = 1$.
	Hence, we have
	$ 
	\phi(X) = \E[X | S] $ for $X \in \X 
	$ and
	this completes the proof.  
\end{proof}

\begin{remark}\label{rem:Lq}
The key step in the proof of Theorem \ref{thm:BT} is to obtain the property  \eqref{eq:SBT}; $\phi$ with such a property is sometimes called a projection. Several characterizations of  the conditional expectation directly rely on this property; see \cite{P67}   and \cite{EFHN15}. 
In particular, Theorem 1 of \cite{P67} holds for subspaces of $L^1(\Omega,\mathcal F,\p)$, and hence our result  in Theorem \ref{th:full} holds for general $\X=L^q$ where $q\in [1,\infty]$. 
\end{remark}

\begin{proof}[Proof of Theorem \ref{th:full}]

	The ``if" statement  is checked in Section \ref{sec:axiom}. To show the   the ``only if" statement, we separate the two cases of $\X=L^1$ and $\X=L^1_+$. 
	
	\noindent (i)  \underline{The case   $\X=L^1$.}
	
	  Let $\mathbf A$ be a risk sharing rule satisfying Axioms AF, RF, RA and OA.  
	Fix any $S \in \X$. 
	Define the mapping
	$$
	h^{S}:\X\to L^1(\Omega, \sigma(S), \p), ~ X\mapsto A_1^{(X,S-X,0,\dots,0)}.
	$$
	Note that RA guarantees that $h^{S}$ takes values  in  $L^1(\Omega, \sigma(S), \p)$. 
	We will verify that $h^{S}$  satisfies the following properties on $\X$:
	\begin{enumerate}[(a)]
		\item constant preserving: $h^{S}(t)=t$ for all $t\in \R$; 
		\item additivity: $h^{S} (X+Y) = h^{S} (X) + h^{S}(Y) $ for $X,Y\in \X$; 
		\item monotonicity: $h^{S}(Y) \ge h^{S}(X)$ if $Y\ge X$;
		\item $h^{S}(S) = S$;
		\item $\E[h^S(X)] = \E[X]$ for $X \in \X$;
	\end{enumerate}
	First, (a) follows directly from \eqref{eq:constancy} and the definition of $h^{S}$. Next, 
	we proceed to prove (b). By using Axiom OA, we have, for any $X,Y\in \X$ (note that here we use the fact that $n \ge 3$),
	\begin{align}
		h^{S} (X+Y)& = A_1^{(X+Y, S-X-Y, 0, \dots, 0)} \notag
		\\& =  A_1^{(X,S-X-Y,Y,0,\dots,0)} + A_3^{(X,S-X-Y,Y,0,\dots,0)} \notag
		\\& =  A_1^{(X,S-X,0,\dots,0)} + A_3^{(0,S-Y,Y,0,\dots,0)}  =  h^{S}(X) + A_3^{(0,S-Y,Y,0,\dots,0)} \label{eq:rw1}
	\end{align}
	where Axiom OA is used in the second and third equalities. 
	In particular, by choosing $X=0$ and using the fact that $h^{S}(0) = 0$ in (a), \eqref{eq:rw1} implies
	 $
	h^{S} (Y)= A_3^{(0,S-Y,Y,0,\dots,0)}.
	 $     
	Using this relationship and \eqref{eq:rw1}, we further have 
	$$
	h^{S} (X+Y) = h^{S} (X) + h^{S} (Y),
	$$
	and hence (b) holds.
	Next, we show (c). Using Axiom RF, we have $h^{S} (X-Y)\le 0$ if $X-Y\le 0$.
	Hence, by (b), we obtain (c). Moreover, (d) is implied by the equality
	$  A^{(S,0,\dots,0)}_1 = S $  from  \eqref{eq:trivial}. Finally, (e) follows from Axiom AF. 
	
	
Using Theorem \ref{thm:BT}, (a)-(e) imply that $h^S$ admits the representation
	$$h^S(X)=\E[X|S] \mbox{~for all $X\in \X$}.$$
For any $\XX \in \X^n$, let $S=S^\XX = \sum_{i = 1}^n X_i$. Using Axiom OA and the representation of $h^S$, we have 
	$$
	A_1^\XX = A_1^{(X_1, S -X_1, 0, \cdots, 0)} = h^{S }(X_1) = \E[X_1 | S ]= \E\left [X_1 | S^\XX \right].
	$$
	Similarly, we have $A_j^\XX = \E[X_j | S^\XX]$ for any $j = 2, \cdots, n$, which gives that $\mathbf{A}$ is CMRS.

	\noindent(ii) \underline{The case $\X=L^1_+$.} 
	
The gap between the proof in case  $\X=L^1$ 
and this case is that we need the following extension argument. 
	
	Let $\mathbf A$ be a risk sharing rule satisfying Axioms AF, RF, RA and OA. 
	Fix any $S \in \X = L^1_+$. Denote by $B_S=\{X\in L^1_+:  X\le S \}$, which is the set of random variables between $0$ and $S$.
	Define the mapping as in the proof of case (i),
	$$
	h^{S}: B_S \to L^1(\Omega, \sigma(S), \p), ~ X\mapsto A_1^{(X,S-X,0,\dots,0)}.
	$$
	It is clear that $h^{S}$ is well-defined on $B_S$ and satisfies additivity 
	$$
	h^S(X + Y) = h^S (X) + h^S(Y) \mbox{ for } X, Y, X+Y \in B_S,
	$$
	which can be  checked by the same argument as in the proof of Theorem \ref{th:full}.
	Define $C_S=\{\lambda X:\lambda \in \R_+ ,~X \in B_S \} $ which is the cone generated by $ B_S$, and $L_S=\{\lambda X:\lambda \in \R  ,~X \in B_S \} $  which is the linear space generated by $B_S$. 
	According to Lemma \ref{lem:extension} below,   $h^S$ can be uniquely extended on $C_S$ and $L_S$ and it is linear on $L_S$. This allows us to use the same arguments in (i) to get 
	$$
	h^S(X) = \E[X|S], ~~ X \in L_S,
	$$ 
and following the rest of the steps for the proof of Theorem \ref{th:full} yields that $\mathbf{A}$ is CMRS.
%
\end{proof}

\begin{lemma}\label{lem:extension}
	Fix $S\in L^1_+ $. 
	Any additive function $\phi: B_S\to L^1_+$
	has a unique additive extension on $C_S$ and a unique linear extension on $L_S$.
\end{lemma}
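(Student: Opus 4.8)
\textbf{Proof plan for Lemma \ref{lem:extension}.}

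The plan is to first build the extension to the cone $C_S$ by positive homogeneity, then to the linear span $L_S$ by a difference construction, checking well-definedness and linearity at each stage. For the cone: every element of $C_S$ has the form $\lambda X$ with $\lambda \ge 0$ and $X \in B_S$, so I would \emph{define} $\widetilde\phi(\lambda X) = \lambda\,\phi(X)$. The first thing to verify is that this does not depend on the representation. If $\lambda X = \mu Y$ with $X,Y \in B_S$ and $\lambda,\mu > 0$ (the case where one scalar is $0$ forces the element to be $0$, handled by $\phi(0)=0$ which follows from additivity), I would pick a common rational lower bound: choose integers $p,q$ with $p/q$ close to the ratio, more cleanly, observe that additivity of $\phi$ on $B_S$ gives $\phi(kZ) = k\phi(Z)$ whenever $Z, 2Z, \dots, kZ$ all lie in $B_S$, and $\phi(Z/k) = \phi(Z)/k$ whenever $Z \in B_S$ (apply additivity to $k$ copies of $Z/k$). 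Hence $\phi(rZ) = r\phi(Z)$ for every nonnegative rational $r$ with $rZ \in B_S$. From $\lambda X = \mu Y$ one deduces $X = (\mu/\lambda) Y$; approximating $\mu/\lambda$ from below by rationals $r_n \uparrow \mu/\lambda$ and using monotonicity of $\phi$ (additivity plus $\phi \ge 0$ on $B_S$ gives $\phi$ is monotone on $B_S$) to squeeze, one gets $\phi(X) = (\mu/\lambda)\phi(Y)$ in the a.s.\ sense, i.e.\ $\lambda\phi(X) = \mu\phi(Y)$. This is the step I expect to require the most care: pinning down the scalar-homogeneity of $\phi$ from mere additivity on the bounded set $B_S$, and handling the limiting/monotonicity argument cleanly in $L^1$.

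Once $\widetilde\phi$ is well-defined on $C_S$, additivity on $C_S$ follows easily: given $\lambda X, \mu Y \in C_S$, rescale by a common small factor $\epsilon>0$ so that $\epsilon\lambda X, \epsilon\mu Y, \epsilon\lambda X + \epsilon\mu Y$ all lie in $B_S$ (possible since $B_S$ absorbs small multiples of any element of $C_S$, as $C_S$ is the cone on the order interval $[0,S]$), apply additivity of $\phi$ there, and scale back up using the homogeneity just established. Positive homogeneity of $\widetilde\phi$ on $C_S$ is immediate from the definition.

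For the linear span $L_S$: every element of $L_S$ is a difference $U - V$ with $U, V \in C_S$ (indeed $L_S = C_S - C_S$ since $C_S$ is a cone), so I would set $\widehat\phi(U - V) = \widetilde\phi(U) - \widetilde\phi(V)$. Well-definedness is the standard Grothendieck-type argument: if $U - V = U' - V'$ then $U + V' = U' + V \in C_S$, apply additivity of $\widetilde\phi$ on $C_S$ to get $\widetilde\phi(U) + \widetilde\phi(V') = \widetilde\phi(U') + \widetilde\phi(V)$, hence $\widetilde\phi(U) - \widetilde\phi(V) = \widetilde\phi(U') - \widetilde\phi(V')$. Additivity of $\widehat\phi$ on $L_S$ then follows by writing two elements as differences and regrouping, and homogeneity over all of $\R$ follows by combining positive homogeneity on $C_S$ with $\widehat\phi(-W) = -\widehat\phi(W)$; together these give linearity. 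Uniqueness of both extensions is forced at every stage: any additive extension to $C_S$ must satisfy $\psi(\lambda X) = \lambda\psi(X)$ for rational $\lambda$ by the same copies-argument, hence for all $\lambda \ge 0$ by the density/monotonicity squeeze, so it must agree with $\widetilde\phi$; and any linear (equivalently, additive, by the difference representation) extension to $L_S$ is then determined by its values on $C_S$. This completes the proof.
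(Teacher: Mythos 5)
Your proposal is correct and follows essentially the same route as the paper: rational homogeneity from additivity on $B_S$, upgraded to positive homogeneity on $[0,1]$ via the monotone squeeze, a common rescaling back into $B_S$ to get additivity on the cone $C_S$, and the difference construction $L_S = C_S - C_S$ for the linear extension (the paper outsources this last step to Theorem 1.10 of Aliprantis--Burkinshaw rather than writing out the well-definedness check you give). The only cosmetic difference is that the paper normalizes each $X \in C_S$ by $\gamma_X = \sup\{\gamma\in[0,1]:\gamma X\in B_S\}$ instead of verifying representation-independence of $\lambda X \mapsto \lambda\phi(X)$ directly; the two are equivalent.
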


\begin{proof} 
	For   $X\in C_S$, denote by $\gamma_X=\sup \{\gamma \in [0,1]: \gamma X\in B_S\}$.
	Note that  there exists $\lambda_X\in \R_+$ and $Y\in B_S$ such that $\lambda_X Y=X$, and hence $\gamma_X \ge 1/\lambda_X>0$.
	Moreover, we have $\gamma_X X \in B_S$ since $B_S$ is closed.  
	Define $\widehat \phi(X)= \phi(\gamma_X X)/\gamma_X$ for $X\in C_S$.
	It is clear that $\widehat \phi =\phi$ on $B_S$  because $\lambda_X=1$ for all $X\in B_S$.
	We next verify that $\widehat \phi$ is additive. 
	
	Take $m,k\in \N$  such that $m\le k$. 
	By additivity of $\phi$ on $B_S$, we have 
	$
	\phi (mX/k ) = m\phi ( X/k)
	$ for $X\in B_S$.
	By taking $m=1$, we get $\phi(X/k)=\phi(X)/k$,
	which in turn gives $
	\phi (mX/k ) = m\phi ( X)/k.
	$  
	Since $X$ is non-negative, positivity (monotonicity) of $\phi$ further gives $\phi(\lambda X)=\lambda \phi(X)$ for any real number $\lambda \in [0,1]$. 
	
	For any $X,Z\in C_S$ such that $Z\ge X$,  since $\gamma_Z \le \gamma_ X,$ we obtain, 
	by choosing $\lambda = \gamma_Z/\gamma_X$,  \begin{align}\label{eq:zdx}\phi(\gamma_Z X) = \phi(\lambda \gamma_X X) = \lambda\phi  (\gamma_X X )  = \frac{\gamma_Z}{\gamma_X} \phi  (\gamma_X X ).
	\end{align}
	Take any $X,Y\in C_S$ and  write $Z=X+Y$.
	Using \eqref{eq:zdx} and additivity of $\phi$ on $B_S$, 
	\begin{align*}
		\widehat \phi(X+Y) &=  \frac{1}{\gamma_{Z}}\phi(\gamma_{Z}( X+Y))\\& = \frac{1}{\gamma_{Z}}  \phi(\gamma_ZX) +\frac{1}{\gamma_{Z}}  \phi(\gamma_Z Y) =
		\frac{1}{\gamma_{X}} \phi  (\gamma_X X ) +\frac{1}{\gamma_Y} \phi (\gamma_Y Y) 
		=\widehat \phi(X) + \widehat \phi(Y).   
	\end{align*}
	Therefore, $\widehat \phi$ is additive on $C_S$. 
	The extension is unique because any two additive and monotone functions agreeing on $B_S$ must agree on $C_S$. 
	The unique linear extension to $L_S$ follows from Theorem 1.10 of \cite{AB06}.
\end{proof}

\begin{proof}[Proof of Proposition \ref{prop:indep}]
	\begin{enumerate}[(i)]
	
		\item  The $Q$-CMRS rule $\mathbf{A}_{Q\text{-cm}}^\XX = \E^Q[\XX|S^\XX]$ satisfies Axioms RA, RF and OA with the same reasoning as the CMRS. Since $\E[ \mathbf{A}_{Q\text{-cm}}^\XX] =\E^Q[\XX]$, AF does not hold as long as $  Q\ne \p$.
		
		\item  For the mean-adjusted all-in-one risk sharing rule  
		$$
		\mathbf{A}_{\rm ma}^\XX =  \left(S^\XX-\E[S^\XX],0,\dots,0 \right)  +\E[\XX],
		$$ it is clear that Axioms  RA  and AF hold  by definition. 
		Axiom OA holds because the allocation to agent $i\in [n]$ is determined only by $(X_i,S)$.   Axiom RF does not hold because the allocation to agent $1$ is not a constant if $S^\XX$ is not a constant, regardless of whether $X_1$ is a constant, violating \eqref{eq:constancy}.
		
		\item  For the identity risk sharing rule $\mathbf{A}_{\rm id}^\XX = \XX$, it is clear that  
		Axioms AF, RF and OA hold. Axiom RA does not hold because $\mathbf{A}_{\rm id}^\XX$ is not necessarily a function of $S^\XX$.

		\item Consider a combination of $\mathbf{A}_{\rm all}$ and $\mathbf{A}_{\rm cm}$, defined by
	$\mathbf{A}^\XX = \mathbf{A}_{\rm all}^\XX =      (S^\XX , 0,\cdots,  0  )$ if  $\mathbf X$ is standard Gaussian, and $\mathbf{A}^\XX =\mathbf{A}_{\rm cm}^\XX = \E[\XX|S^\XX]$ otherwise. 
		Axioms  AF, RF and RA  and be checked separately for   $\mathbf{A}_{\rm all}$ and $\mathbf{A}_{\rm cm}$, by noting that RF only needs to be checked for $\mathbf A_{\rm cm}$ since the standard Gaussian $\XX$  is not included in the statement of RF.
		
	 To verify that OA does not hold, it suffices to  consider $n=3$. Let $\XX=(X_1,X_2,X_3)$ follow  a standard Gaussian distribution. By definition, $A_1^\XX=S^\XX$. 
	 However, for $\mathbf Y=(X_1,X_2+X_3,0)$,
	 we have $A_1^{\mathbf Y}=\E[X_1|S^\XX]=S^\XX/3 \ne A_1^\XX$, thus violating OA.   
%
\qedhere
	\end{enumerate}
	
\end{proof}

%
%
%
%

\section{Proofs in Section \ref{sec:property}}

\begin{proof}[Proof of Proposition \ref{prop:SAF}]
Since  $X\le_{\rm cx} Y$ implies   $X\le \sup X\le \sup Y$ and $\E[X]=\E[Y]$,  UI implies both RF and AF. Property CP follows from AF and RF as discussed in \eqref{eq:constancy}.
\end{proof}

\begin{proof}[Proof of Proposition \ref{prop:BT}]
We only show the case that $\X=L^1$, as the case $\X=L^1_+$ is analogous. 
	Let $\mathbf A: \X^n \to \X^n$ be a risk sharing rule satisfying Axioms AF, RF and OA.  
	Fix $S\in \X$. For any $X\in L^1(\Omega,\sigma(S),\p)$,   $A_1^{(X,S-X,0,\dots,0)}$ is $\sigma(S)$-measurable,  because it is $\sigma(X,S-X)$-measurable by assumption and $\sigma(S)=\sigma(X,S-X)$.
	Define the mapping
	$$
	h^{S}: L^1(\Omega, \sigma (S ), \p) \to L^1(\Omega, \sigma(S), \p), ~ X \mapsto A_1^{(X,S-X,0,\dots,0)},
	$$
	where we use the fact that $h^{S}(X)$ is $\sigma(S)$-measurable for $X\in L^1(\Omega, \sigma (S ), \p)$. 
	The arguments in the proof of Theorem \ref{th:full} yield that $h^{S }$ satisfies the conditions in Theorem \ref{thm:BT}. 
	By Theorem \ref{thm:BT}, $h^{S }$ is the identity on $L^1 (\Omega, \sigma (S    ), \p )$. Using this and   OA, 
	for any $\XX\in \mathbb A_n(S)$ and $X_1\in L^1(\Omega,\sigma(S),\p)$,  we have
	$$
	A_1^{\XX} = A_1^{(X_1, S-X_1, 0, \cdots, 0)} = h^{S}(X_1) = X_1.
	$$ 
	The other case of $A_j^\XX$ for $j\in [n]$ are similar. 
\end{proof}


\begin{proof}[Proof of Corollary \ref{cor:UI}]
	The proof follows directly from Theorem \ref{th:full}, Proposition \ref{prop:SAF}, and the fact that CMRS satisfies Property UI.
\end{proof}

\begin{proof}[Proof of Proposition \ref{coro:conflict}]
Fix  a non-constant $S\in \X$, and write $h^{S}:\X\to \X, ~ X\mapsto A_1^{(X,S-X,0,\dots,0)}$.
By ZP, we have $h^S(S)=S$. 
Using additivity  \eqref{eq:rw1} guaranteed by OA in the proof of Theorem \ref{th:full}, we have 
$h^S( 2S)=2 h^S(S) =2 S$. 
This and additivity give $h^S(-S)=-S$, and 
therefore,   $\mathbf A^{(-S,2S,0,\dots,0)}=(-S,2S,0,\dots,0)$ is not comonotonic.  
\end{proof}

\begin{proof}[Proof of Proposition \ref{prop:SM}]
By Proposition \ref{prop:OA}, OA implies that $A_i^\XX$ is determined by $(X_i,S^\XX)$ and $i \in [n]$. It suffices to show that $i\in [n]$ is also not relevant.  
Using ZP and OA, we have, for the pair $(1,3)$ and any $X,S\in \X$,
$$
A_1^{(X,S-X,0,\dots,0)}= S-A_2^{(X ,S-X,0,\dots,0)} = S-A_2^{(0,S-X, X,0,\dots,0)} =  A_3^{(0,S-X,X,0,\dots,0)}.$$
The other pairs $(i,j)$  are similar. Therefore, $A_i^\XX$ is determined by $(X_i,S^\XX)$ regardless of $i \in [n]$, showing that SM holds. 
\end{proof}

\section{Proofs in Section \ref{sec:generalized}}
\begin{proof}[Proof of Theorem \ref{th:full3}]
In Section \ref{sec:axiom}, we   
used properties of the conditional expectation to check Axioms AF, RF  and OA for CMRS, and the same properties hold for the generalized CMRS.
Properties IA and IB are straightforward from   basic properties of $ \E  [\XX | \mathcal{G}^\XX ]$.
Therefore, the ``if" statement holds.

 We proceed to prove the ``only if" statement. 
 We will only show the case $\X=L^1$. The case $\X=L^1_+$ is similar as we see in the proof of Theorem \ref{th:full}.
 Let $\widehat{\mathbf A}$  be a generalized risk sharing rule satisfying Axioms AF, RF and OA and Properties IA and IB.
	Fix any $S \in \X$. 
	Define the mapping
	$$
	h^{S|\mathcal G}:\X\to L^1(\Omega, \mathcal{G}^{S}, \p), ~ X\mapsto \widehat A_1^{(X,S-X,0,\dots,0)|\mathcal G},
	$$
	where $\mathcal G^S=\sigma(S,\mathcal G)$.
	Note that IA guarantees that $h^{S|\mathcal G}$ takes values  in  $L^1(\Omega, \mathcal{G}^S, \p)$. 
	We will verify that $h^{S|\mathcal G}$  satisfies the following properties on $\X$:
	\begin{enumerate}[(a)]
		\item constant preserving: $h^{S|\mathcal G}(t)=t$ for all $t\in \R$; 
		\item additivity: $h^{S|\mathcal G}  (X+Y) = h^{S|\mathcal G}  (X) + h^{S|\mathcal G} (Y) $ for $X,Y\in \X$; 
		\item monotonicity: $h^{S|\mathcal G} (Y) \ge h^{S|\mathcal G} (X)$ if $Y\ge X$;
		\item $h^{S|\mathcal G} (X) = X$ for any $X $ that is $\mathcal{G}^S$-measurable;
		\item $\E[h^{S|\mathcal G}(X)] = \E[X]$ for $X \in \X$. 
	\end{enumerate}
	The properties (a)-(c) and (e) can be  shown analogously  to the proof of Theorem \ref{th:full}. The property (d) is implied by Property IB. 
	Note that  (d) corresponds to  \eqref{eq:SBT} in the proof of  Theorem \ref{thm:BT}.
	Using the same argument there, we obtain that (a)-(e) imply that $h^{S|\mathcal G}$ admits the representation
	$$h^{S|\mathcal G}(X)=\E\left [X|\mathcal{G}^S\right] \mbox{~for all $X\in \X$}.$$
For any $\XX \in \X^n$, let $S=S^\XX = \sum_{i = 1}^n X_i$. Using Axiom OA and the representation of $h^{S|\mathcal G}$, we have 
	$$
	\widehat A_1^\XX = \widehat A_1^{(X_1, S -X_1, 0, \cdots, 0)} = h^{S|\mathcal G }(X_1) = \E\left[X_1 | \mathcal{G}^\XX \right].
	$$
	Similarly, we have $A_j^\XX = \E[X_j | \mathcal{G}^\XX]$ for any $j = 2, \cdots, n$, which completes the proof.
\end{proof}

\section{Proofs in Section \ref{sec:app}}
\begin{proof}[Proof of Proposition \ref{prop:bitcoin}]
We have seen that CMRS satisfies the four axioms for mappings on general spaces, and hence also on $\mathcal B_n$, and it clearly satisfies the definition of a reward sharing rule. Below we show that the four axioms are   sufficient for CMRS. 
	Fix $D \subseteq \Omega$ independent of $P$ and denote by  
	$$
	I_D=\{C \subseteq D: \mbox{$C$ is independent of $P$}\} \mbox{~~and~~}M_{D} = \{ P\id_C: C \in I_D\}.
	$$   
	Define the mapping as in the proof of Theorem \ref{th:full},
	$$
	h^{D}: M_{D} \to L^1(\Omega, \sigma(P \id_D), \p), ~ P \id_C \mapsto A_1^{(P \id_C, P(\id_D-\id_C),0,\dots,0)}.
	$$
	We can check that $h^{D}$  satisfies additivity
	$$
	h^{D}(X + Y) = h^{D} (X) + h^{D}(Y) \mbox{ for } X, Y  \in M_{D} \mbox{ with } X+Y\in M_D,
	$$
	and monotonicity 	$$
	h^{D}(X )  \le h^{D}(Y) \mbox{ for } X, Y  \in M_{D} \mbox{ with } X\le Y;
	$$
	these statements can be shown by   arguments using AF,  RF and OA as in the proof of Theorem \ref{th:full}.
	For $m\in \N$, take $C_1,\dots,C_m \in I_D$ such that $\p(C_1)=\dots=\p(C_m) $ and $\bigcup_{j=1}^m C_j=D$.
	Using $h^{D}(P\id_D) = P\id_D$ (guaranteed by AF and RF) and $h^D(P\id_{C_1}) = h^D(P\id_{C_j})$ for $j\in [m]$ (by the definition of a reward sharing rule), we get from additivity of $h^{D}$ that $$h^D(P\id_{C_1}) = \frac{P\id_D}{m} = P\id_D \frac{\p(C_1)}{\p(D)}.$$
	Since $C_1$ is arbitrary, we get that, for any $ C\in I_D$ with $\p(C)= \p(D) /m$,  
	\begin{align}\label{eq:bitcoin-1} h^D(P\id_{C}) = P\id_D \frac{\p(C)}{\p(D)} = P \id_D\E[ \id_{C} |P\id _D]= \E[P\id_{C} |P\id _D].\end{align} 
Using additivity again, we know that \eqref{eq:bitcoin-1} holds for any $j\in [m]$ and $C\in I_D$ with $\p(C)=j\p(D) /m$,  and finally, by  monotonicity of $h^D$, we get \eqref{eq:bitcoin-1} for all $C\in I_D$. 
Following the rest of the steps for the proof of Theorem \ref{th:full} yields that $\mathbf{A}$ is CMRS on $\mathcal B_n$.  
	\end{proof}

\end{document}